\newtheorem{lem}{Lemma}
\newtheorem{thm}{Theorem}
\newtheorem{rem}{Remark}
\newtheorem{ass}{Assumption}
\def\mb{\mathbf}
\def\mc{\mathcal}
\DeclareMathOperator*{\argmin}{argmin}
\begin{document}
	\title{\huge \bf Log-Scale Quantization in Distributed First-Order Methods: Gradient-based Learning from Distributed Data
	}
	
	\author{Mohammadreza~Doostmohammadian,  Muhammad~I.~Qureshi, Mohammad Hossein Khalesi,\\ Hamid R. Rabiee,~\IEEEmembership{Senior Member,~IEEE}, Usman A. Khan,~\IEEEmembership{Senior Member,~IEEE} 
		
		\thanks{
			M. Doostmohammadian and M. H. Khalesi are with Faculty of Mechanical Engineering, Semnan University, Semnan, Iran, \texttt{\{doost,mhkhalesi\}@semnan.ac.ir}. 	
			M. I. Qureshi is with the Electrical and Computer Engineering Department, Tufts University, Medford, USA \texttt{muhammad.qureshi@tufts.edu}.
		   H. R. Rabiee is with the Computer Engineering Department, Sharif  University of Technology, Tehran, Iran \texttt{rabiee@sharif.edu}.
		   U. A. Khan holds concurrent appointments as a Professor of ECE/CS at Tufts University and as an Amazon Scholar. This paper describes work performed at Tufts University and is not associated with Amazon, \texttt{khan@ece.tufts.edu}.
	   }
	}
	\maketitle
\begin{abstract}
		Decentralized strategies are of interest for learning from large-scale data over networks. This paper studies learning over a network of geographically distributed nodes/agents subject to quantization. Each node possesses a private local cost function, collectively contributing to a global cost function, which the considered methodology aims to minimize. In contrast to many existing papers, the information exchange among nodes is log-quantized to address limited network-bandwidth in practical situations.
		We consider a first-order computationally efficient distributed optimization algorithm (with no extra inner consensus loop) that leverages node-level gradient correction based on local data and network-level gradient aggregation only over nearby nodes. This method only requires balanced networks with no need for stochastic weight design. It can handle log-scale quantized data exchange over possibly time-varying and switching network setups. We study convergence over both structured networks (for example, training over data-centers) and ad-hoc multi-agent networks (for example, training over dynamic robotic networks). 
		Through experimental validation, we show that (i) structured networks generally result in a smaller optimality gap, and (ii) log-scale quantization leads to a smaller optimality gap compared to uniform quantization. 
	\end{abstract}
\def\abstractname{Note to Practitioners}
\begin{abstract}
Motivated by recent developments in cloud computing, parallel processing, and the availability of low-cost CPUs and communication networks, this paper considers distributed and decentralized algorithms for machine learning and optimization. These algorithms are particularly relevant for decentralized data mining, where data sets are distributed across a network of computing nodes. A practical example of this is the classification of images over a networked data centre. In real-world scenarios, practical model nonlinearities such as data quantization must be addressed for information exchange among the computing nodes. This work emphasizes the importance of handling log-scale quantization and compares its performance over uniform quantization. By exploring these quantization methods, we aim to determine which is more accurate in terms of optimality gap and learning residual. Moreover, we study the impact of the structure of the information-sharing network on reducing the optimality gap and improving the convergence rate of distributed algorithms. As contemporary distributed and networked data mining systems demand highly accurate algorithms with fast convergence for real-time applications, our research emphasizes the benefit of structured networks under logarithmic quantization information-exchange. Our findings can be extended to different machine learning algorithms, offering pathways to more accurate and faster data mining solutions.	
\end{abstract}
	\begin{IEEEkeywords}
		Distributed Algorithm, data classification, quantization, graph theory, optimization 
	\end{IEEEkeywords}

	\section{Introduction}\label{sec_intro}
	\IEEEPARstart{L}{earning} 
	from geographically distributed data has emerged as a critical challenge in various domains such as sensor networks, edge computing, and decentralized systems \cite{raja2015cloud,abbasian2020survey,rahimian2023differentially}. The scenario involves individual nodes possessing private local cost functions, collectively contributing to a global cost that needs to be minimized \cite{yang2019survey,hashempour2021distributed,ddsvm,camponogara2010distributed}. This paper focuses on addressing this challenge through gradient-based optimization over networks, targeting scenarios where the cost functions are locally known at distributed nodes communicating over quantized channels.
	Data quantization, the process of representing data with a reduced number of bits, is often employed to reduce the communication load in such decentralized setups and is proven as an effective way to reduce communication costs, for example, in machine learning applications \cite{oh2021automated,sattler2021cfd}. One common quantisation type is logarithmic or log-scale that efficiently assigns more bits to represent smaller values and fewer bits to larger values \cite{kalantari2010logarithmic,cai2018deep}. This is beneficial when dealing with gradients or weights that often have a wide dynamic range. Smaller values, which might be more critical for convergence, are represented with higher precision. This may increase the complexity of the distributed learning algorithm as we move from linear dynamics to non-linear dynamics due to quantization. Uniform quantization, on the other hand, simplifies the representation of values, which can lead to a faster communication process. However, the lack of precision, especially for small values, might affect performance in scenarios where accurate representation of gradients or weights is crucial, and lead to residual error \cite{aysal2007distributed,pereira2008distributed}. Therefore, studying and comparing uniform and log-scale quantization for distributed optimization is of interest for distributed data mining applications.

	The existing literature on distributed optimization mostly consider linear function of the state variables, where the focus is mainly on reducing the optimality gap and improving the convergence rate. The primary distributed optimization methods are \textbf{GP} \cite{nedic2014distributed} and \textbf{DGD} algorithm \cite{hadjicostis2013average}, which are further extended to \textbf{DSGD} \cite{sundhar2010distributed} and \textbf{SGP} \cite{spiridonoff2020robust,nedic2016stochastic} by adopting stochastic gradient methods. Algorithm \textbf{ADDOPT} \cite{xi2017add} and its stochastic version \textbf{S-ADDOPT} \cite{qureshi2020s} further accelerate the convergence rate using gradient tracking. Similarly, momentum-based approaches improve the convergence rate using heavy-ball method \cite{nguyen2023geometric}, Nesterov gradient method \cite{nesterov2013introductory,qu2019accelerated}, or a combination of both \cite{nguyen2023accelerated}. On the other hand, \textbf{PushSVRG} \cite{9827792} and \textbf{PushSAGA} \cite{qureshi2021push,xin2021fast} algorithms use variance reduction to eliminate the uncertainty caused by the stochastic gradients. 
	Few other distributed optimization works use sign-based nonlinearity in the node dynamics to address finite-time and fixed-time convergence \cite{gong2021distributed,liu2022distributed2,chen2022distributed}.
	All these methods assume unlimited bandwidth to communicate arbitrary real numbers which in practice is not possible as all realistic communication protocols rely on finite precision.
	 %One key aspect missing in these works is the acknowledgement of quantized information exchange among nodes. This introduces a layer of complexity, in contrast to common approaches assuming linear communication patterns. 
	 The effect of uniform quantization over channels with limited data-rate and the associated optimality gap (or residual) is discussed in \cite{pu2016quantization,horvath2023stochastic,rabbat2005quantized}. The work \cite{xiong2022quantized} proposes a novel quantized distributed gradient tracking algorithm to minimize the finite sum of local costs over digraphs with linear convergence rate. The paper \cite{song2022compressed} combines the gradient tracking Push-Pull method with communication compression (quantized links). Similarly, communication compression technique compatible with a general class of operators unifying both unbiased and biased compressors is discussed in \cite{liao2022compressed}. Distributed quantized gradient-tracking under a dynamic encoding and decoding scheme is considered in \cite{kajiyama2020linear}.
	 What is missing from the existing literature is the consideration of log-scale quantization for distributed optimization to address resource-constrained communication networks with limited-bandwidth. This is primarily addressed for distributed resource allocation and constraint-coupled optimization \cite{scl}, but not for distributed learning over networks.
	 
	 %\ab{quantized optimization \cite{xiong2022quantized,song2022compressed,liao2022compressed,kajiyama2020linear} } 
	
	The main contributions of this paper are as follows: (i) we develop a log-scale quantized first-order distributed optimization method that employs network-level gradient tracking. This follows the real-world applications in which quantization is prevalent in networking setups. 
	%We further make a numerical comparison between uniform quantization versus logarithmic quantization in terms of optimality gap and convergence rate. 
	The proposed solution can be generalized to consider other sector-bound nonlinearities\footnote{A nonlinear function $f(x)$ is called sector-bound if it lies in a sector typically defined by two lines passing through the origin, creating an angular region. In mathematical form, there exist positive factors $\overline{\mc{K}}$ and $\underline{\mc{K}}$ for which we have $\underline{\mc{K}}x \leq f(x) \leq \overline{\mc{K}}x$.}  in the data-transmission channels. (ii) We develop the convergence analysis of the proposed log-scale quantized algorithm and provide the bound on the gradient-tracking step-size.    
	%Structured networks offer stability and predictability but may struggle to adapt to dynamic data distributions. Ad-hoc multi-agent networks, on the other hand, provide flexibility but may pose challenges in achieving cohesive global optimization due to their dynamic nature. An essential component of our investigation lies in understanding how optimization strategies can be tailored to suit these distinct network architectures. For structured networks, emphasis is placed on leveraging the predefined structure to optimize data exchange, enhance convergence rates, and reduce the optimality gap. In ad-hoc multi-agent networks, the challenge lies in dynamically adapting optimization strategies to the evolving network topology while ensuring global convergence.
	(iii) Moreover, unlike existing methods, our approach eliminates the need for \textit{weight-stochastic} design on the network. This is done by adopting a \textit{weight-balanced} design instead, simplifying implementation in the face of node/link failure and enhancing adaptability to \textit{time-varying} setups in the presence of link-level nonlinearities.  Through empirical validation, we analyze the convergence of our method for both academic and real-world data-learning setups with distributed and heterogeneous data while considering quantization. This work further explores optimization methodologies that can adapt to both structured networks and ad-hoc multi-agent networks. Structured networks embody a pre-defined organized framework, imposing order and predefined communication patterns on the participating nodes. In contrast, ad-hoc multi-agent networks dynamically adapt and emerge based on the specific demands of the distributed environment, for example, based on the nearest neighbour rule \cite{jadbabaie2003coordination}. This paper performs a comparative numerical analysis, unravelling the trade-offs and advantages inherent in using log-scale quantization in contrast to uniform quantization.
	
	The rest of the paper is organized as follows. Section~\ref{sec_prob} states the problem and terminologies. The proposed distributed algorithm and its convergence analysis are presented in Section~\ref{sec_alg}. Experimental validations are given in Section~\ref{sec_exp}. Finally, we conclude the paper in Section~\ref{sec_con}.
	
	\section{Assumptions, Terminology, and Problem Statement} \label{sec_prob}
	The distributed learning problem is defined as optimizing certain cost functions as a sum of some local cost functions distributed over a network of nodes/agents. In machine learning and data mining, the cost function (also known as the loss function or objective function) measures the difference between the predicted values of a model and the ground truth in the training dataset. The cost function quantifies how well or poorly a model performs by assigning a numerical value to the error or deviation between predicted and actual values. The optimization algorithm then adjusts the model's parameters to minimize this cost, effectively improving the model's performance. For example, different loss functions are used for data classification, regression, and support-vector-machine. The global cost is defined as the average of local losses at different nodes; mathematically, this is defined as
	\begin{align}
		\min_{\mb{x} \in \mathbb{R}^{p}} &
		F(\mb x) = \frac{1}{n}\sum_{i=1}^{n} f_i(\mb{x})\label{eq_prob}
	\end{align}   
	where each local loss $f_i(\mb{x})$ is private to node/agent $i$. Particularly, each local loss $f_i$ might be decomposable
	into a finite sum of $m_i$ component cost functions, i.e.,
	\begin{align}\label{eq_fij}
		f_i(\mb{x}) = \frac{1}{m_i}\sum_{j=1}^{m_i} f_{i,j}(\mb{x}).
	\end{align}
	In distributed learning and optimization setup, the training cost might be associated with error-back-propagation in Neural Networks \cite{qureshi2020s}, classification gap in SVM \cite{dsvm}, or regression training cost \cite{scaman2017optimal}.   
	
	We assume that the nodes share data (local gradient information or auxiliary variables) over a communication network $\mc{G}$. Each node locally communicates with its neighbouring nodes $j \in \mc{N}_i$ and shares relevant data over the links/channels. 
	In real-world practical applications, the information shared over the channels/links in $\mc{G}$ might be subject to quantization (or other nonlinear functions). The nonlinear mapping $h_l(\cdot)$ is  sign-preserving, odd, and sector-bound, i.e., it satisfies
	\begin{align}\label{eq_hl}
		0< \underline{\mc{K}} \leq \frac{h_l(z)}{z} \leq \overline{\mc{K}}
	\end{align}
	An example of such nonlinear mapping is logarithmic (or log-scale) quantization defined as
	\begin{align}\label{eq_hl_q}
		q_l(z) = \mbox{sgn}(z)\exp\left(\rho\left[\dfrac{\log(|z|)}{\rho}\right] \right),
	\end{align}
	with $[\cdot]$ as rounding to the nearest integer, and $\mbox{sgn}(\cdot)$, $\mbox{exp}(\cdot)$, and $\mbox{log}(\cdot)$ as the sign, exponential, and (natural) logarithm functions, respectively. The parameter $\rho$ is the logarithmic quantization level for which we have 
	\begin{align}\label{eq_hl_q2}
		\underline{\mc{K}} = 1-\dfrac{\rho}{2} \leq \dfrac{q_l(z)}{z} \leq 1+\dfrac{\rho}{2} = \overline{\mc{K}}.
	\end{align}	
	Note that, as illustrated in Fig.~\ref{fig_quant_desmos}, this nonlinearity is sector-bound, i.e., it is upper/lower-bounded by the two lines $(1 \pm \dfrac{\rho}{2})z$.
	This is in contrast to uniform quantization which is not sector-bound, as defined below 
	\begin{align}\label{eq_hl_qu}
		q_u(z) = \rho\left[\dfrac{z}{\rho}\right],
	\end{align}
	where $\rho$ is the uniform quantization level. The two quantization functions are represented in Fig.~\ref{fig_quant_desmos} for comparison. It can be observed that, in contrast to the log-scale quantization, the upper/lower bounds of uniform quantization, defined as $z \pm \dfrac{\rho}{2}$, are biased from the origin. In this work, we study the convergence under log-scale quantization as compared to uniform quantization. These two are the main types of the nonlinearities governing the data-transmission setup; for example see the following references on uniform quantization \cite{xiong2022quantized,song2022compressed,liao2022compressed,kajiyama2020linear} and on logarithmic quantization \cite{my_ecc,scl}. These nonlinearities are employed to reduce the communication bandwidth and network traffic in resource-constrained networking setups.
	\begin{figure} %[b]
		\centering
		\includegraphics[width=3.5in]{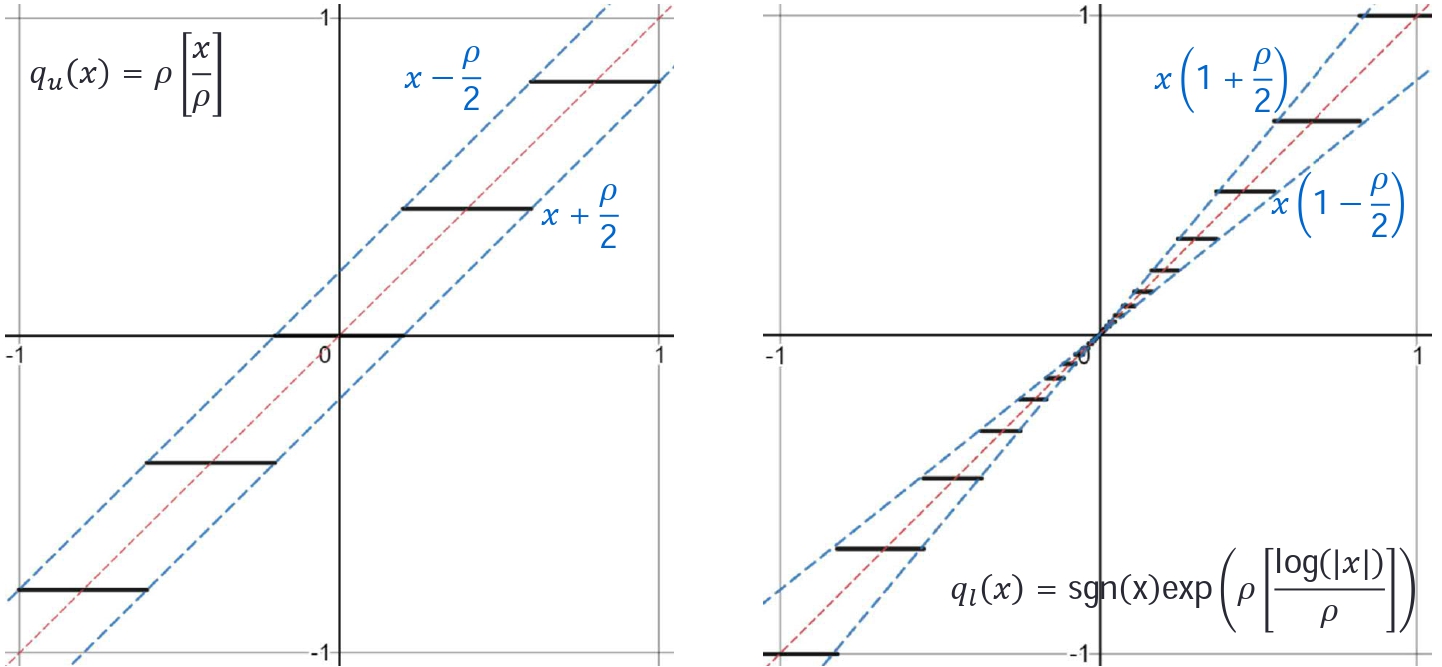}	
		\caption{Uniform quantization (as a non-sector-bound function) versus logarithmic quantization (as a sector-bound function). The logarithmic case leads to finer quantization around zero in contrast to the uniformly quantized case.  
		} \label{fig_quant_desmos}
	\end{figure}
	
	The assumptions to ensure convergence
	to the unique minimum $\mb{x}^*$ of the global cost function $F$ are as follows.
	\begin{ass} \label{ass_wb}
		The data-transmission network is connected (at all times), weight-balanced (WB), and could be time-varying.
	\end{ass}
    The WB condition implies that the sum of the weights on incoming links and outgoing links are equal at all nodes. This holds for every time-instant and for all the configurations of the network topology. To give a motivation behind this assumption, note that the WB condition is more relaxed than weight-stochasticity assumption in related works which require the link weights to sum to one for all nodes.
	\begin{ass}	\label{ass_conv}
		Each local cost function $f_i$ has $L$-smooth and is not necessarily convex, while the global cost $F$ is strongly convex.%, where $\kappa = L/\mu$ denotes its condition number.
	\end{ass}
	The above assumption is motivated by the discussion in \cite{Bottou} and is used in many references therein saying that it is common to use strong convexity of $F$ as many practical optimization objectives tend to be strongly convex near the minimum.
	These standard assumptions imply that the global cost function is differentiable and its second derivative is
	bounded above by some constant $L$ and below by strong-convexity constant $\mu$ where $0<\mu <L$. Further, the global optimizer $\mb{x}^*$
	of the cost $F$ exists and is also unique. The following subsection further motivates the study in this paper.
	%The time-varying property of the network further ensures convergence over volatile and dynamic network topologies, for example, in mobile multi-agent systems.

	\subsection{Effect of Network Structure and Quantization} \label{sec_effect}
	%Sample structured versus ad-hoc networks are given in Fig.~\ref{fig_graph}.
	%\begin{figure} %[b]
	%	\centering
	%	\includegraphics[width=2.5in]{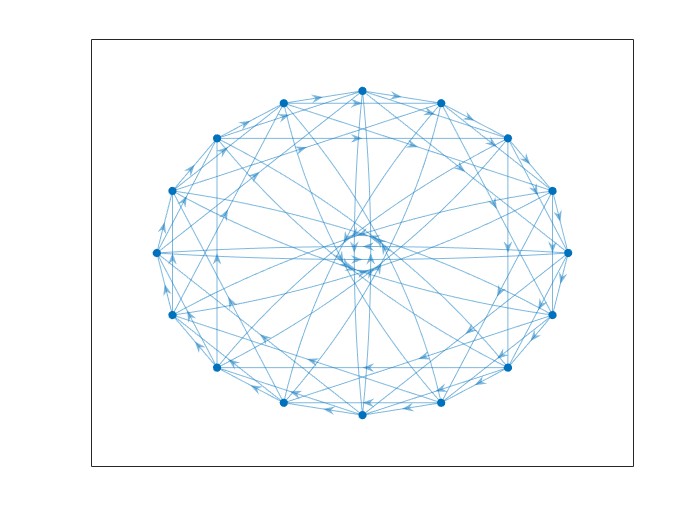}
	%	\caption{(Left) a sample exponential graph of $n=16$ nodes (Right)
	%	} \label{fig_graph}
	%\end{figure}
	This paper investigates the effect of network structure by numerical experimentation. In distributed optimization and learning over networks, the choice of network topology plays a crucial role in determining the convergence rate and optimality gap of the learning algorithms \cite{nedic2018network,ying2021exponential}. Different network topologies can have varying effects on the efficiency and speed of convergence \cite{takezawa2024beyond,snam24,bandyopadhyay2018impact}. A general experimental comparison between structured networks (such as exponential graphs\footnote{In exponential graphs, as the number of nodes increases exponentially, the degree at each node increases linearly.}) and ad-hoc networking setups (such as geometric graphs\footnote{In geometric graphs, two nodes are connected if they are in physical proximity.}) is considered in this paper. Examples of structured networks include networked data centres or hierarchical social networks, and examples of ad-hoc random networks include collaborative robotic networks or wireless sensor networks.
	Structured networks (such as exponential graphs), characterized by well-defined and organized connections, can often lead to faster convergence rates \cite{ying2021exponential}. In exponential graphs, where nodes have a hierarchical structure, information can propagate more efficiently through the network \cite{lusher2013exponential}. Furthermore, structured networks may exhibit lower optimality gaps due to better coordination among the nodes \cite{nedic2018network}. The organized nature of the network facilitates synchronized updates, potentially leading to quicker convergence to a near-optimal solution.
	On the other hand, ad-hoc multi-agent networks (such as random geometric graphs), which might lack a clear structure, can have a more complex convergence behaviour. The convergence rate may be slower with a higher optimality gap compared to structured networks because information propagation might not follow an organized pattern. Nodes may require more iterations to align their models and achieve a consensus on the optimal solution \cite{tahbaz2006consensus}.
	%Note that, in general, the impact of network topology on convergence rate and optimality gap may vary based on the specific algorithm, problem, and the characteristics of the data.
	
	Another concern is to consider switching network topologies addressing 
	time-varying setups, for example, in mobile sensor networks and collaborative swarm robotics. The change in the network topology might be due to packet drops or link failures. Recall that most existing linear algorithms \cite{nedic2014distributed,hadjicostis2013average,sundhar2010distributed,spiridonoff2020robust,nedic2016stochastic,xi2017add,qureshi2020s,nguyen2023geometric,qu2019accelerated,nguyen2023accelerated,9827792,qureshi2021push}
	necessitate stochastic weight design which is prone to change in the network topology. This is because any change in the network structure violates the stochastic condition and requires the redesign of stochastic weights, for example by applying the algorithms in \cite{cons_drop_siam,6426252,datar2018memory}. In contrast, this work only requires WB design of matrices $A_\gamma$ and $B_\gamma$ that allows handling possible changes in the network topology. This is better illustrated in Fig.~\ref{fig_remov}. To preserve the WB condition in the presence of packet drops, we assume that if a packet is dropped on $i \rightarrow j$ link then $j \rightarrow i$ link is automatically deleted or, in other words, the agent $i$ does not apply the packet received from $j$ (for example, by setting the associated weight equal to zero), and the entire link is removed.
	This assumption is justified in existing literature \cite{6426252,olfatisaberfaxmurray07} saying that assuming common knowledge, both agents $i, j$ are aware
	of the delivery or loss of the packets over the
	mutual link. 
	\begin{figure}
		\centering
		\includegraphics[width=1.25in]{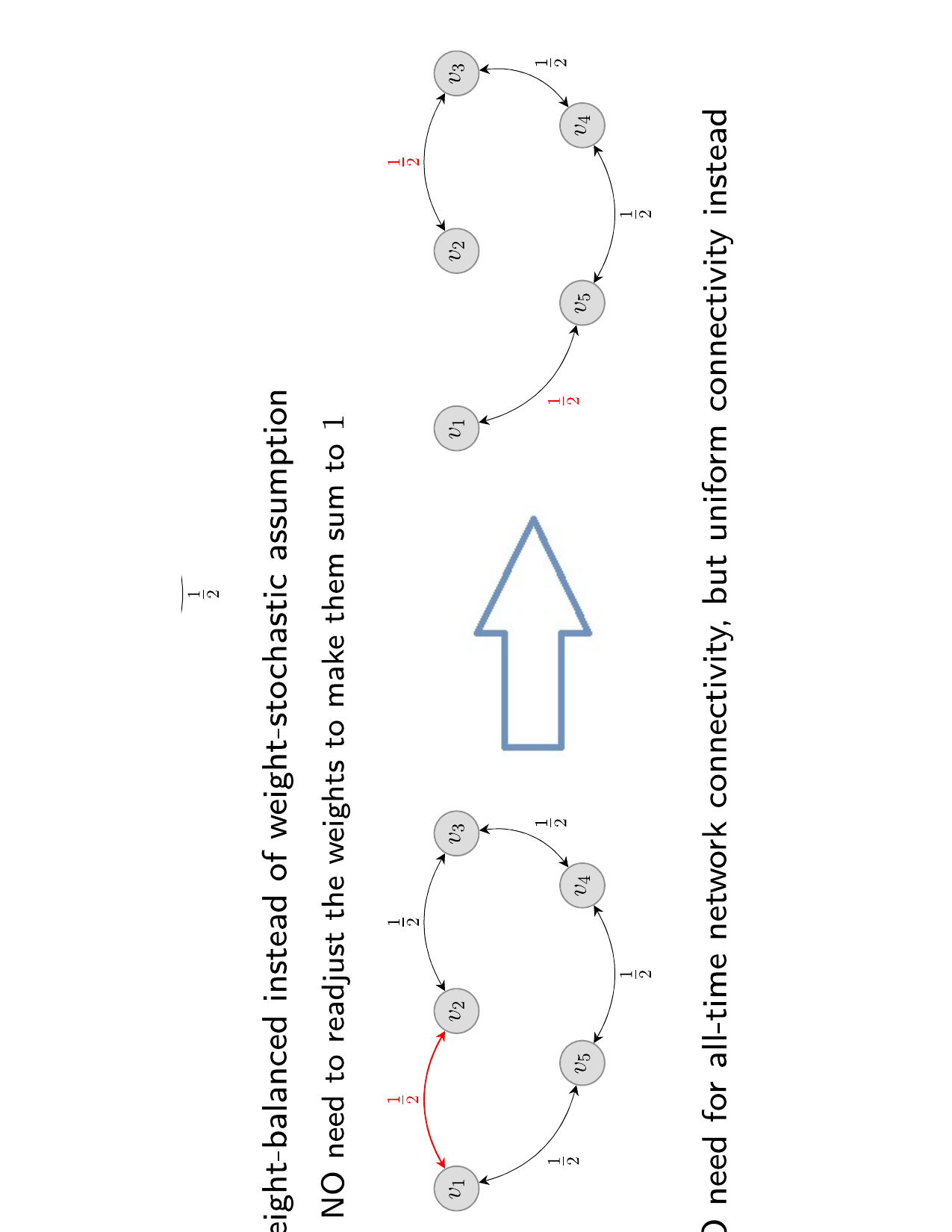}
		\includegraphics[width=1.25in]{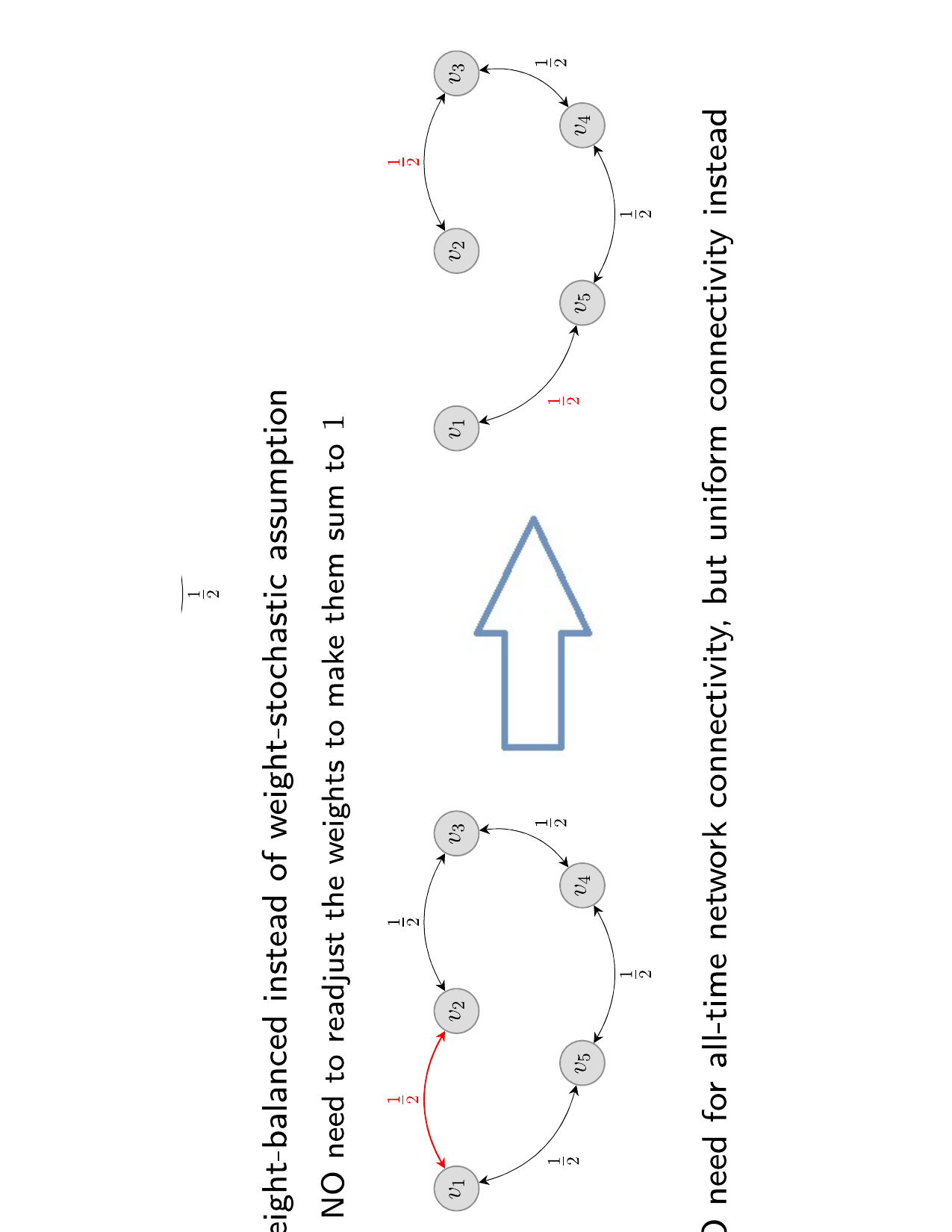}
		\caption{This figure gives an example graph topology which is both stochastic and WB. The red link represents an unreliable channel that might be subject to failure or packet drop. By removing this link, the network preserves the WB condition but loses stochasticity. Therefore, many weight-stochastic algorithms in the literature need to \textit{redesign} the weights for convergence. } \label{fig_remov}
	\end{figure}
	
	We address more findings in terms of data quantization over channels. First, it should be noted that logarithmic quantization, as compared to uniform quantization, is more efficient in scenarios where the optimization problem involves a wide range of gradient magnitudes. This follows the sector-bound nature of the log-scale quantization, as illustrated in Fig.~\ref{fig_quant_desmos}. Smaller values, which are key for convergence, can be represented more accurately (with more bits) under logarithmic quantization. This may lead to a smaller optimality gap and more efficient convergence, for example in optimal resource allocation \cite{ecc22}. The lack of precision in uniform quantization, especially for small values, might slow down convergence and increase the optimality gap in the range of values where accurate representation of gradients or weights is crucial \cite{oh2022non}. On the other hand, using finer bits for small values under log-quantization increases the communication complexity as compared to the uniform quantization \cite{xu2023aslog}. This implies that log-quantization, in general, adds more complexity to the communication network as compared to uniform quantization.
	In terms of network structure in the presence of quantization, although exponential graphs are structured and can facilitate efficient quantized communication, their optimality gap might be considerably affected by quantization inaccuracy. In other words, the efficacy of the network structure might be dominated by the effect of quantization. 
	These comments are investigated by the numerical experiments in Section~\ref{sec_exp}.

	\section{Distributed Optimization over Networks: Algorithm and Main Results} \label{sec_alg}
	\subsection{Algorithm}
	This section presents our main algorithm for learning over networks. The methodology is based on two WB matrices $A_\gamma=[a^\gamma_{ij}] \in \mathbb{R}^{n \times n}$ and $B_\gamma=[b^\gamma_{ij}] \in \mathbb{R}^{n\times n}$ that capture the weights for data-sharing over the network, one for consensus on the states and the other for gradient-tracking dynamics. The index $\gamma: k \mapsto \Gamma$ denotes the associated network topology that may change over time, whereas $\Gamma$ denotes the set of all possible topologies of the data-sharing network. The graph is all-time connected, i.e., all the network topologies in $\Gamma$ are connected.  The two matrices $A_\gamma$ and $B_\gamma$ can be designed independently, matrix $A_\gamma$ to characterize the weights on the local state value $\mb{x}_i$ and matrix $B_\gamma$ to characterize the weights on the auxiliary variable $\mb{y}_i$ for gradient tracking. These weight matrices are WB over $\mc{G}$ as discussed before,
	%, satisfying $\sum_{i=1}^{n} a^\gamma_{ij} = \sum_{j=1}^{n} a^\gamma_{ij}$ and $\sum_{i=1}^{n} b^\gamma_{ij} = \sum_{j=1}^{n} b^\gamma_{ij}$. 
	where the entries satisfy the following:
	\begin{align}
		A_\gamma(i,j) = \left\{
		\begin{array}{ll}
			a^\gamma_{ij}>0, & \text{If}~ j \in \mc{N}_i \\
			0, & \text{Otherwise}.
		\end{array}\right. \sum_{j=1}^{n} a_{ij}^\gamma=\sum_{i=1}^{n} a_{ij}^\gamma
	\end{align}
	\begin{align}
		B_\gamma(i,j) = \left\{
		\begin{array}{ll}
			b^\gamma_{ij}>0, & \text{If}~ j \in \mc{N}_i \\
			0, & \text{Otherwise}.
		\end{array}\right. \sum_{j=1}^{n} b_{ij}^\gamma=\sum_{i=1}^{n} b_{ij}^\gamma
	\end{align} 
	Let $\overline{A}_\gamma$ and $\overline{B}_\gamma$ denote the Laplacian matrices associated with $A_\gamma$ and $B_\gamma$, respectivly. The definition follows as
	\begin{align}
		\overline{A}_\gamma(i,j) = \left\{
		\begin{array}{ll}
			-\sum_{i=1}^{n} a^\gamma_{ij}, & i=j \\
			a^\gamma_{ij}, & i\neq j.
		\end{array}\right.
	\end{align}
	\begin{align}
		\overline{B}_\gamma(i,j) = \left\{
		\begin{array}{ll}
			-\sum_{i=1}^{n} b^\gamma_{ij}, & i=j \\
			b^\gamma_{ij}, & i\neq j.
		\end{array}\right.
	\end{align}
	It is known that for connected networks the eigenvalues of these Laplacian matrices are all in the left-half-plane except one zero eigenvalue \cite{olfatisaberfaxmurray07}.  The absolute value of the second largest eigenvalue of the laplacian matrices, denoted by $|\lambda^A_2|$ and $|\lambda^B_2|$, play a key role in the convergence and learning rate. These values are known as the algebraic connectivity.
	Given the weight matrices and switching signal $\gamma$, the local learning and optimization formulation is formally described below:
	\begin{align} \label{eq_xdot_g}	
		\dot{\mb{x}}_i &= \sum_{j=1}^{n} a_{ij}^\gamma (h_l(\mb{x}_j)-h_l(\mb{x}_i))-\alpha \mb{y}_i, \\ \label{eq_ydot_g}
		\dot{\mb{y}}_i &= \sum_{j=1}^{n} b_{ij}^\gamma (h_l(\mb{y}_j)-h_l(\mb{y}_i) ) + \partial_t \nabla f_i(\mb{x}_i),
	\end{align}
The solution by \eqref{eq_xdot_g}-\eqref{eq_ydot_g} is composed of two dynamics: one to derive the agents to reach agreement on the optimization variable $\mb{x}$ and one to track the consensus on the local gradients. For the latter, we introduce an auxiliary variable $\mb{y}$ that tracks the average of gradients in a consensus-based setup and this variable derive the first dynamics toward the global minimum, while both dynamics address log-quantized communication.  The log-scale quantization keeps the algorithm efficient in terms of communication with better performance as more resources are allocated near the origin.
%	and	its discretized version is as follows:
%	
%	\small
%	\begin{align} \label{eq_xdot_gd}	
%		{\mb{x}}^{k+1}_i &= {\mb{x}}^{k}_i +\sum_{j=1}^{n} \tilde{a}_{ij}^\gamma (h_l(\mb{x}_j^k)-h_l(\mb{x}_i^k))-\tilde{\alpha} \mb{y}^{k}_i, \\ \label{eq_ydot_gd}
%		{\mb{y}}^{k+1}_i &= {\mb{y}}^{k}_i +\sum_{j=1}^{n} \tilde{b}_{ij}^\gamma (h_l(\mb{y}^{k}_j)-h_l(\mb{y}^{k}_i) ) + \nabla f_i({\mb{x}}^{k+1}_i)-\nabla f_i({\mb{x}}^{k}_i),
%	\end{align} \normalsize
%	where $k$ denotes the iteration/epoch and $\alpha$ as the learning-rate is sufficiently small. In the discretized version, the elements $\tilde{a}_{ij}^\gamma = \eta {a}_{ij}^\gamma$, $\tilde{b}_{ij}^\gamma = \eta {b}_{ij}^\gamma$, and $\tilde{\alpha}=\eta {\alpha}$ are modified by the discrete step-time $\eta$.
In contrast to most existing methods with weight-stochastic matrices (introduced in Section~\ref{sec_intro}), the proposed dynamics work under WB design and under switching network topologies. These make it easier to handle link-failure as discussed in Fig.~\ref{fig_remov}. Moreover, the link (or channel) nonlinearity $h_l(\cdot)$ is not considered in the existing literature, which allows to address log-scale data quantization by setting $h_l(z)=q_l(z)$ given in Eq.~\eqref{eq_hl_q}.   
The discrete-time version of our methodology is summarized in Algorithm~\ref{alg_1}.
\begin{algorithm} \label{alg_1}
	\textbf{Data:}  $f_{i,j}(\mb{x})$, $\mc{G}_\gamma$, $A_\gamma$, $B_\gamma$, $\alpha$  \\	
	\textbf{Initialization:} ${\mb{y}}_i(0)=\mb{0}_{p}$, random ${\mb{x}}_i(0)$
	\\
	\For{$k=0,1,2,\dots$}{
		Each node $i$ receives $h_l(\mb{x}_j^k)$ and $h_l(\mb{y}^{k}_j)$ from $j \in \mc{N}_i$\;
		$\mb{x}^{k+1}_i \leftarrow {\mb{x}}^{k}_i +\sum_{j=1}^{n} a_{ij}^\gamma (h_l(\mb{x}_j^k)-h_l(\mb{x}_i^k))-\alpha \mb{y}^{k}_i$\;
		$w_i^k \leftarrow \nabla f_i({\mb{x}}^{k+1}_i)-\nabla f_i({\mb{x}}^{k}_i)$\;
		$\mb{y}^{k+1}_i \leftarrow {\mb{y}}^{k}_i +\sum_{j=1}^{n} b_{ij}^\gamma (h_l(\mb{y}^{k}_j)-h_l(\mb{y}^{k}_i) ) + w_i^k$\;
		%$h_l(\mb{x}_i^{k+1})$ and $h_l(\mb{y}^{k+1}_i)$ are shared over $\mc{G}_\gamma$\;
	}
	\textbf{Return:}  optimal $\mb{x}^*$ and $F^*$\;	
	\caption{Local learning at each node $i$. }
\end{algorithm} 

	Note that the learning rate tightly depends on the network topology and the algebraic connectivity. 
	%Note that, the proof of convergence follows similarly to the linear case in \cite{ddsvm} without any non-ideal linking conditions,  and it is skipped here due to space limitation. 
	Summing the states over all the nodes we have 
	\begin{align}  \label{eq_sumydot}
		\sum_{i=1}^n {\mb{y}}_i^{k+1} 
		&= \sum_{i=1}^n {\mb{y}}_i^{k} + \sum_{i=1}^n \nabla f_i(\mb{x}^{k+1}_i)-\sum_{i=1}^n \nabla f_i(\mb{x}^{k}_i), \\
		\label{eq_sumxdot}
		\sum_{i=1}^n {\mb{x}}_i^{k+1} 
		&= \sum_{i=1}^n {\mb{x}}_i^{k}-\alpha \sum_{i=1}^n {\mb{y}}_i^{k}.
	\end{align}
	This follows the WB assumption which implies that 
	$$\sum_{i=1}^{n} \sum_{j=1}^{n} a_{ij}^\gamma (h_l(\mb{x}_j^k)-h_l(\mb{x}_i^k))=0$$
	$$\sum_{i=1}^{n} \sum_{j=1}^{n} b_{ij}^\gamma (h_l(\mb{y}^{k}_j)-h_l(\mb{y}^{k}_i))=0$$
	By setting the initial condition as~$\mb{y}_i(0)=\mb{0}_{p}$ we obtain the following, 
	\begin{eqnarray} \label{eq_sumxdot2}
		\sum_{i=1}^n {\mb{x}}_i^{k+1} = -\alpha \sum_{i=1}^n {\mb{y}}_i^{k+1} = -\alpha \sum_{i=1}^n \nabla f_i(\mb{x}^{k+1}_i),
	\end{eqnarray}
	This represents the consensus-type gradient-tracking dynamics that capture the essence of the proposed methodology. This formulation further allows for addressing sector-bound odd nonlinearities $h_l$ (for example log-scale quantization) without violating the tracking property. This along with the time-varying network topology advances the recent state-of-the-art algorithms for data-mining and optimization \cite{qureshi2020s,9827792,qureshi2021push,pu2016quantization,horvath2023stochastic,rabbat2005quantized}.

	Recall that most of the existing algorithms prove convergence for ideal networking conditions and in the absence of nonlinearities (e.g., quantization and clipping). However, in the presence of nonlinearities and non-ideal networking conditions, the solution may result in a certain optimality gap, which also depends on the structure of the underlying network $\mc{G}_\gamma$. The role of the network structure and quantization is also discussed in the simulation section.
	
	\subsection{Proof of Convergence} \label{sec_proof}
	This section proves the convergence of the distributed dynamics~\eqref{eq_xdot_g}-\eqref{eq_ydot_g} using eigen-spectrum perturbation-based analysis. We also state some relevant lemmas. The proofs in this section hold for any sign-preserving, odd, sector-bound nonlinear mapping $h_l(\cdot)$, which also includes log-scale quantization $h_l(z)=q_l(z)$ defined by \eqref{eq_hl_q}.  
	In the rest of the paper, for notation simplicity, we drop the dependence $(t,\gamma)$ unless where it is needed. 
	
	\begin{lem} \label{lem_dM} ~\cite{cai2012average} Consider the square matrix~$P(\alpha)$ of size $n$ which depends on parameter~${\alpha \in \mathbb{R}_{\geq0} }$. Let~$P(0)$ has~${N<n}$ equal eigenvalues~$\lambda_1=\ldots=\lambda_N$, associated with right and left unit eigenvectors~$\mb{v}_1,\ldots,\mb{v}_N$ and~$\mb{u}_1,\ldots,\mb{u}_N$ (which are linearly independent). Let~${P' = \partial_{\alpha} P(\alpha)|_{\alpha=0}}$ and $\lambda_i(\alpha)$ represent its $i$-th eigenvalue~${\lambda_i, i \in \{1,\ldots,N\}}$. Then,~$ \partial_{\alpha}\lambda_{i}|_{\alpha=0}$ is the $i$-th eigenvalue of, % the following~$l$-by-$l$ matrix,
		\[\left(\begin{array}{ccc}
			\mb{u}_1^\top P' \mb{v}_1 & \ldots & \mb{u}_1^\top P' \mb{v}_N \\
			& \ddots & \\
			\mb{u}_N^\top P' \mb{v}_1 & \ldots & \mb{u}_N^\top P' \mb{v}_N
		\end{array} \right).
		\]
	\end{lem}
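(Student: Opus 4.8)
The plan is to treat the statement as a first-order \emph{degenerate} eigenvalue-perturbation result and to establish it by expanding the perturbed eigenpairs to first order in $\alpha$ and projecting onto the unperturbed eigenspace. First I would collect the right eigenvectors into $V=[\mb{v}_1,\dots,\mb{v}_N]\in\mbb{R}^{n\times N}$ and the left eigenvectors into $U=[\mb{u}_1,\dots,\mb{u}_N]\in\mbb{R}^{n\times N}$, normalized biorthogonally so that $\mb{u}_j^\top\mb{v}_k=\delta_{jk}$, i.e.\ $U^\top V=I_N$ (this is always possible when $\lambda\Let\lambda_1=\dots=\lambda_N$ is a semisimple eigenvalue of $P(0)$, which is guaranteed here by the existence of $N$ independent left and right eigenvectors for an eigenvalue of algebraic multiplicity $N$). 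With this normalization $U^\top P(0)V=\lambda I_N$, each left eigenvector satisfies $\mb{u}_j^\top P(0)=\lambda\,\mb{u}_j^\top$, and the $N\times N$ matrix displayed in the statement is exactly $M\Let U^\top P' V$.

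Next I would invoke analytic perturbation theory to obtain, for $\alpha$ small, the $N$ eigenvalues $\lambda_i(\alpha)$ of $P(\alpha)$ that emanate from $\lambda$, together with right eigenvectors, admitting first-order expansions $\lambda_i(\alpha)=\lambda+\alpha\mu_i+o(\alpha)$ and $\mb{v}_i(\alpha)=V\mb{c}_i+\alpha\mb{w}_i+o(\alpha)$, where the unknown $\mb{c}_i\in\mbb{R}^N$ records the limiting combination of the degenerate eigenvectors and $\mb{w}_i\in\mbb{R}^n$ is the first-order correction. Substituting these together with $P(\alpha)=P(0)+\alpha P'+o(\alpha)$ into $P(\alpha)\mb{v}_i(\alpha)=\lambda_i(\alpha)\mb{v}_i(\alpha)$ and left-multiplying by $U^\top$: the $O(1)$ terms give the trivial identity $\lambda\mb{c}_i=\lambda\mb{c}_i$, while at order $\alpha$ the two terms carrying the unknown correction, $U^\top P(0)\mb{w}_i=\lambda\,U^\top\mb{w}_i$ from the left-hand side and $\lambda\,U^\top\mb{w}_i$ from the right-hand side, cancel; what remains is $M\mb{c}_i=\mu_i\mb{c}_i$. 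Hence every $\mu_i=\partial_\alpha\lambda_i|_{\alpha=0}$ is an eigenvalue of $M$ with eigenvector $\mb{c}_i$, and since the $N$ branches exhaust the eigenvalues of $P(\alpha)$ near $\lambda$, the multiset $\{\partial_\alpha\lambda_i|_{\alpha=0}\}_{i=1}^{N}$ coincides with the spectrum of $M$, i.e.\ the claim.

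The step I expect to be the main obstacle is the rigorous justification of those first-order expansions at the \emph{repeated} eigenvalue: one must show that the branches $\lambda_i(\alpha)$ and a suitable choice of eigenvectors are differentiable at $\alpha=0$ and that the eigenvectors converge into $\mathrm{span}(V)$. Semisimplicity of $\lambda$ is exactly what makes this true --- for a non-semisimple (defective) degenerate eigenvalue the branches would generically split with fractional Puiseux exponents and no ordinary derivative would exist, so that hypothesis cannot be dropped. A way to make the argument fully rigorous without tracking eigenvectors is to use the Riesz spectral projection $\Pi(\alpha)=\frac{1}{2\pi i}\oint_{\Gamma}(zI_n-P(\alpha))^{-1}\,dz$ along a small contour $\Gamma$ enclosing only $\lambda$: $\Pi(\alpha)$ is analytic in $\alpha$ with $\Pi(0)=VU^\top$, the eigenvalues of $P(\alpha)$ inside $\Gamma$ are exactly those of the compressed operator $\Pi(\alpha)P(\alpha)\Pi(\alpha)$, and differentiating this compression at $\alpha=0$ and reading it off in the basis $V$ reproduces $\lambda I_N+\alpha M+o(\alpha)$, again identifying the derivatives of the eigenvalues with the eigenvalues of $M$. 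I would present the eigenvector-expansion computation as the main line of argument and cite \cite{cai2012average} (and standard references on analytic perturbation of eigenvalues) for the supporting facts.
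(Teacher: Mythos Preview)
Your argument is the standard degenerate first-order perturbation computation and is correct, including the important caveat that semisimplicity of the repeated eigenvalue is what guarantees differentiability of the branches (and that a defective eigenvalue would instead produce Puiseux-type splitting). The Riesz-projection alternative you sketch is indeed the cleanest way to make the expansion rigorous.

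However, there is nothing to compare against: the paper does not prove this lemma at all. It is stated with the citation \cite{cai2012average} and used as a black box in the proof of Theorem~\ref{thm_zeroeig}. So your write-up supplies a self-contained derivation where the paper simply imports the result; if you want to match the paper's treatment you would just cite the reference, whereas your version has the advantage of making explicit the semisimplicity hypothesis that the lemma's statement leaves implicit.
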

	
	\begin{lem} \label{lem_Wg2}
		\cite{SensNets:Olfati04} Given the balanced digraph $\mc{G}$ its Laplacian matrix $\overline{A}$ (or $\overline{B}$) has all its eigenvalue in the left-half-plane except one isolated zero eigenvalue with associated non-negative left eigenvector $\mb{u}_1^\top$ satisfying $\sum_{i=1}^n u_{1,i} >0$.
	\end{lem}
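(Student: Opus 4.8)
The plan is to prove the three constituent claims---(a) every eigenvalue of $\overline{A}$ has non-positive real part; (b) the value $0$ is an algebraically simple, hence isolated, eigenvalue; and (c) the associated left eigenvector may be chosen entrywise non-negative with $\sum_i u_{1,i}>0$---by combining Gershgorin's disc theorem with the Perron--Frobenius theorem, and then to observe that the argument is word-for-word identical for $\overline{B}$.

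For (a), consider the $i$-th row of $\overline{A}$: with the sign convention adopted here its diagonal entry is $-d_i$, where $d_i$ is the common value of the $i$-th row-sum and $i$-th column-sum of $A$ (equal by weight-balance), and $d_i>0$ since connectivity gives node $i$ at least one neighbour; its off-diagonal entries are the non-negative weights $a_{ij}$, $j\neq i$, with $\sum_{j\neq i}a_{ij}=d_i$ because there are no self-loops. Hence the $i$-th Gershgorin disc is $\{z:|z+d_i|\le d_i\}\subset\{\mathrm{Re}(z)\le 0\}$, and it meets the imaginary axis only at the origin; so every eigenvalue satisfies $\mathrm{Re}(\lambda)\le 0$, and any eigenvalue lying on the imaginary axis must equal $0$. (Equivalently, under weight-balance $\overline{A}+\overline{A}^{\top}$ is the negative Laplacian of a connected undirected graph, so $2\mathrm{Re}(\lambda)=\mb{x}^{*}(\overline{A}+\overline{A}^{\top})\mb{x}\le 0$ for a unit eigenvector $\mb{x}$, with equality forcing $\mb{x}\in\mathrm{span}\{\mb{1}\}$.)

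For (b) and (c), set $M=\max_i d_i$ and consider $B=MI+\overline{A}$, which is entrywise non-negative and, because $\mc{G}$ is strongly connected, irreducible. By the bound from step (a) its spectral radius equals $M$, so Perron--Frobenius applies: $M$ is a simple eigenvalue of $B$ with strictly positive right and left eigenvectors. Translating back, $0$ is a simple eigenvalue of $\overline{A}$ with a positive left eigenvector, which after normalization yields $\mb{u}_1\ge\mb{0}$ and $\sum_i u_{1,i}>0$. The null space can moreover be identified explicitly: if $\overline{A}\mb{v}=0$ and $v_i$ is maximal, the $i$-th equation $\sum_{j\in\mc{N}_i}a_{ij}(v_j-v_i)=0$ with $a_{ij}>0$ forces equality along every neighbour, so by connectivity $\mb{v}\in\mathrm{span}\{\mb{1}\}$; and since row- and column-sums coincide under weight-balance, $\mb{1}$ is a left null vector as well, so one may simply take $\mb{u}_1=\tfrac{1}{\sqrt n}\mb{1}$.

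The step that requires the most care is (b): Gershgorin together with the kernel computation only gives that $0$ is an eigenvalue of \emph{geometric} multiplicity one, and they do not on their own exclude a nontrivial Jordan block. Perron--Frobenius closes this gap; alternatively, if a Jordan block of size $\ge 2$ existed there would be $\mb{v}_2$ with $\overline{A}\mb{v}_2=\mb{1}$, and left-multiplying by $\mb{u}_1^{\top}$ (using $\mb{u}_1^{\top}\overline{A}=\mb{0}$) would give $0=\mb{u}_1^{\top}\mb{1}=\sum_i u_{1,i}>0$, a contradiction. Either way, $0$ is simple and isolated while all remaining eigenvalues lie strictly in the open left half-plane, which is the assertion of the lemma.
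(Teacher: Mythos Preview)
Your argument is correct. The paper, however, does not supply its own proof of this lemma: it is stated with a citation to~\cite{SensNets:Olfati04} and used as a black box, so there is no ``paper proof'' to compare against beyond that reference. What you have written is essentially the standard proof one finds in that literature---Gershgorin localisation to the closed left half-plane, followed by a Perron--Frobenius argument on the shifted matrix $MI+\overline{A}$ to extract simplicity of the zero eigenvalue and positivity of the associated left eigenvector---together with the explicit identification $\mb{u}_1=\tfrac{1}{\sqrt n}\mb{1}$ afforded by weight-balance, which is exactly the form the paper exploits later in~\eqref{eq_U}. Your closing remark that geometric multiplicity one plus the generalized-eigenvector contradiction suffices in lieu of Perron--Frobenius is a nice self-contained alternative. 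One minor point worth making explicit for the reader: your invocation of irreducibility (strong connectivity) is justified because a weight-balanced, weakly connected digraph is automatically strongly connected---summing the balance condition over any source component of the condensation forces the outgoing edge weight to vanish---so the paper's standing Assumption~\ref{ass_wb} of ``connected'' is enough.
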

	
	\begin{thm} \label{thm_zeroeig}
		Let Assumptions~\ref{ass_wb} and \ref{ass_conv} hold. For sufficiently small~$\alpha$, all eigenvalues of the dynamics~\eqref{eq_xdot_g}-\eqref{eq_ydot_g} are in the left-half-plane except $m$ zero eigenvalue.
	\end{thm}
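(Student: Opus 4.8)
\emph{Proof strategy.} The plan is to read off the spectrum of the linearization of \eqref{eq_xdot_g}--\eqref{eq_ydot_g} at an equilibrium through an eigenvalue‑perturbation argument in the gradient‑tracking step‑size~$\alpha$, using Lemma~\ref{lem_dM} with $\alpha=0$ as the unperturbed case. First I would stack the state as $(\mb x,\mb y)$ and linearize about an equilibrium; by Lemma~\ref{lem_Wg2} and Assumption~\ref{ass_wb} every equilibrium has $\mb y_i=\mb 0$ and $\dot{\mb x}_i=\mb 0$. The non‑smoothness of $h_l$ (e.g.\ $q_l$ is piecewise constant) is absorbed by the sector bound~\eqref{eq_hl}: along the consensus manifold the terms $\sum_j a^\gamma_{ij}(h_l(\mb x_j)-h_l(\mb x_i))$ are represented by a \emph{balanced} weighted Laplacian $\overline A_{\mc K}$ (gains in $[\underline{\mc K},\overline{\mc K}]$), which by Lemma~\ref{lem_Wg2} has a simple zero eigenvalue with all others in the open left half‑plane, and similarly $\overline B_{\mc K}$ governs the $\mb y$‑consensus; the term $\partial_t\nabla f_i(\mb x_i)=\nabla^2 f_i(\mb x_i)\dot{\mb x}_i$ linearizes to $H\,\delta\dot{\mb x}$ with $H=\mathrm{blkdiag}(\nabla^2 f_i(\mb x^*))$ (the $\nabla^3 f_i$ term drops since $\dot{\mb x}_i=\mb 0$ at equilibrium). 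This gives the system matrix $M(\alpha)=\bigl(\begin{smallmatrix}\overline A_{\mc K} & -\alpha I\\ H\overline A_{\mc K} & \overline B_{\mc K}-\alpha H\end{smallmatrix}\bigr)$. I will display the perturbation computation for $p=1$; the general case is the verbatim block version, with $I_p$ and $\nabla^2 F(\mb x^*)$ in place of the scalars below and the zero‑eigenvalue count multiplied by $p$ -- which is why I expect $m=p$.

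At $\alpha=0$, $M(0)$ is block lower‑triangular, so $\mathrm{spec}\,M(0)=\mathrm{spec}\,\overline A_{\mc K}\cup\mathrm{spec}\,\overline B_{\mc K}$: two zero eigenvalues and the remaining $2(n-1)$ strictly in the open left half‑plane. Applying Lemma~\ref{lem_dM} to the double zero eigenvalue, a right null basis of $M(0)$ is $\{(\mb 1,\mb 0),(\mb 0,\mb 1)\}$, and the matching biorthonormal left basis is $\big\{\tfrac1n(\mb 1^\top,\mb 0),\ \big(\tfrac{1}{n^2}(\mb 1^\top H\mb 1)\,\mb 1^\top-\tfrac1n\mb 1^\top H,\ \tfrac1n\mb 1^\top\big)\big\}$; the delicate point is the $\mb 1^\top H$ correction in the $\mb y$‑block left vector, which is exactly what cancels the otherwise‑nonzero cross term. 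With $P'=\partial_\alpha M=\bigl(\begin{smallmatrix}0 & -I\\ 0 & -H\end{smallmatrix}\bigr)$, a short computation collapses the reduced matrix of Lemma~\ref{lem_dM} to $\bigl(\begin{smallmatrix}0 & -1\\ 0 & -\tfrac1n\sum_i f_i''(\mb x^*)\end{smallmatrix}\bigr)=\bigl(\begin{smallmatrix}0 & -1\\ 0 & -F''(\mb x^*)\end{smallmatrix}\bigr)$ -- notably independent of the representative sector gains -- whose eigenvalues are $0$ and $-F''(\mb x^*)$.

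By Assumption~\ref{ass_conv}, $F$ is $\mu$‑strongly convex, so $F''(\mb x^*)\ge\mu>0$ (in the vector case $\nabla^2 F(\mb x^*)\succeq\mu I$), and the second eigenvalue‑derivative is $\le-\mu<0$; hence for sufficiently small $\alpha>0$ one of the two zero eigenvalues of $M(\alpha)$ moves into the open left half‑plane ($\approx-\alpha F''(\mb x^*)$), while the other stays at exactly zero -- the latter because $\overline A_{\mc K}\mb 1=\mb 0$ keeps $(\mb 1,\mb 0)$ in $\ker M(\alpha)$ for every $\alpha$, and solving $M(\alpha)v=\mb 0$ with $\alpha\neq 0$ forces $\overline B_{\mc K}\overline A_{\mc K}v_x=\mb 0\Rightarrow v_x\in\mathrm{span}(\mb 1)\Rightarrow v_y=\mb 0$, i.e.\ $\dim\ker M(\alpha)=1$. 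The eigenvalues strictly in the left half‑plane at $\alpha=0$ remain there for $\alpha$ small by continuity of the spectrum. The same computation in the general ($p>1$) case replaces $-F''(\mb x^*)$ by the negative‑definite matrix $-\nabla^2 F(\mb x^*)$, so $p$ zero eigenvalues move strictly left and $p$ remain; hence for $\alpha$ below an explicit threshold all eigenvalues of \eqref{eq_xdot_g}--\eqref{eq_ydot_g} lie in the open left half‑plane except exactly $m=p$ zero eigenvalues.

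The step I expect to be the main obstacle is the eigen‑perturbation bookkeeping of the middle paragraph -- choosing the correct biorthonormal null bases of $M(0)$ (the $\mb 1^\top H$ correction is easy to overlook), carrying the Lemma~\ref{lem_dM} reduction through so that it lands exactly on $-\nabla^2 F(\mb x^*)$, and cleanly separating the structural zero mode (from $\overline A_{\mc K}\mb 1=\mb 0$) from the ``gradient‑descent'' mode that strong convexity drives strictly left -- together with making the sector‑bound representation of the non‑smooth $h_l$ precise enough that $\overline A_{\mc K},\overline B_{\mc K}$ genuinely inherit the balanced‑Laplacian spectrum of Lemma~\ref{lem_Wg2}. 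Turning ``sufficiently small $\alpha$'' into the explicit step‑size bound promised in contribution~(ii) would then follow by tracking the higher‑order terms against the left‑half‑plane eigenvalue closest to the imaginary axis, i.e.\ a bound of the form $\alpha=\mc O\!\big(\min(|\lambda^A_2|,|\lambda^B_2|)\,\underline{\mc K}/L\big)$.
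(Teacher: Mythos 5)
Your proposal follows essentially the same route as the paper: decompose the linearized system matrix as $M(\alpha)=M_0+\alpha M_1$, read off the $2m$ zero eigenvalues of the block-triangular $M_0$ from the balanced Laplacians via Lemma~\ref{lem_Wg2}, and apply the eigenvalue-perturbation Lemma~\ref{lem_dM} to conclude that $m$ of them move into the open left half-plane at rate governed by $-\nabla^2 F(\mb{x}^*)$ while $m$ remain at zero. Your bookkeeping is in fact slightly more careful than the paper's --- the $\mb{1}^\top H$ correction in the left null basis (the paper's $(\mb{0},\mb{u}_2)$ is not exactly a left eigenvector of $M_0$ when $H$ is non-scalar) and the explicit $\ker M(\alpha)$ argument for the persistent zero mode --- but the reduced $2\times 2$ matrix and the conclusion are identical.
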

	\begin{proof}
		For the linear case (i.e., setting $h_l(z)=z$) the proposed dynamics~\eqref{eq_xdot_g}-\eqref{eq_ydot_g} can be written in compact form as		
		\begin{align} \label{eq_xydot}
			&\left(\begin{array}{c} \dot{\mb{x}} \\ \dot{\mb{y}} \end{array} \right) = M(\alpha) \left(\begin{array}{c} {\mb{x}} \\ {\mb{y}} \end{array} \right),
			\\ \label{eq_M}
			M(\alpha ) = &\left(\begin{array}{cc} \overline{A} \otimes I_m & -\alpha I_{mn} \\ H(\overline{A}\otimes I_m) & \overline{B} \otimes I_m - \alpha H
			\end{array} \right).
		\end{align}
		where~$H:=\mbox{diag}[\nabla^2 f_i(\mb{x})]$ and $M(\alpha)=M_0+\alpha M_1$ with
		\begin{eqnarray}\nonumber
			M_0 &=&  \left(\begin{array}{cc} \overline{A} \otimes I_m & \mb{0}_{mn\times mn} \\ H(\overline{A} \otimes I_m) & \overline{B}\otimes I_m \end{array} \right),\\\nonumber
			M_1 &=& \left(\begin{array}{cc} \mb{0}_{mn\times mn} & - {I_{mn}} \\ {\mb{0}_{mn\times mn}} & - H \end{array} \right),
		\end{eqnarray}
		Now, considering the nonlinearity $h_l(z) = q_l(z)$\footnote{Note that the proof holds for general sign-preserving sector-bound nonlinearities. We particularly state the solution for log-scale quantization as an example.}, one can linearize the nonlinear dynamics~\eqref{eq_xdot_g}-\eqref{eq_ydot_g} at every time-instant $t$ as its operating point. 
		It is known that the stability of the linearization at every operating point implies the stability of the nonlinear dynamics \cite{nonlin}. To study the stability of the linearized dynamics, we have  
		\begin{align}  \label{eq_Mg}
			M_q(t,\alpha,\gamma) &=   M_q^0 + \alpha M^1 \\ \label{eq_beta_M0}
			\underline{\mc{K}} M^0 & \preceq M_q^0 \preceq \overline{\mc{K}} M^0 \\ \label{eq_beta_M}
			M_q^0 = Z(t)  M^0 &,~ \underline{\mc{K}} I_n  \preceq Z(t) \preceq \overline{\mc{K}} I_n 
		\end{align}
		where $M_q^0$ is the linearized version of $M^0$ for the nonlinear dynamics, $Z(t) := \mbox{diag}[\zeta(t)]$, column vector $\zeta(t) = [\zeta_1(t);\zeta_2(t);\dots;\zeta_n(t)]$ with $\zeta_i(t) = \frac{q_l(\mb{x}_i)}{\mb{x}_i}$ (or equivalently $q_l(\mb{x}(t)) = Z(t) \mb{x}(t)$ at every time-instant $t$). Recall from Eq.~\eqref{eq_hl} that  $\underline{\mc{K}} \leq \zeta_i(t) \leq \overline{\mc{K}}$. Therefore, we have
		\begin{align}  
			\overline{A}_{Z} =  \overline{A} Z(t),~\overline{B}_{Z} =  \overline{B} Z(t) 
		\end{align}
		The above helps to relate the eigen-spectrum of the linear dynamics~\eqref{eq_xydot} to that of the nonlinear case associated with $M_q$. Note that $Z(t)$ is a diagonal matrix from its definition, and we have
		\begin{align}  
			\mbox{det}(\overline{A}_{Z}-\lambda I_{mn}) = \mbox{det}(\overline{A}-\lambda Z(t)^{-1}), \\
			\mbox{det}(\overline{B}_{Z}-\lambda I_{mn}) = \mbox{det}(\overline{B}-\lambda Z(t)^{-1}),
		\end{align}
		with $\lambda$ denoting the eigenvalue. Therefore, from \eqref{eq_beta_M0}-\eqref{eq_beta_M}, one can relate the eigenspectrum of the linearized version with that of the linear case as
		\begin{align} \label{eq_spect_k}
			\underline{\mc{K}} \sigma(M^0) \leq \sigma(M^0_q) \leq \overline{\mc{K}} \sigma(M^0)
		\end{align}
		where the eigenspectrum of the linear case follows the block triangular form of~$M_0$ as,
		\begin{align} \label{eq_sigma}
			\sigma(M^0) = \sigma(\overline{A} \otimes I_m) \cup \sigma(\overline{B} \otimes I_m)
		\end{align}
		For the rest of the proof, we study the eigenspectrum perturbation of~$M^0$ and then extend the results to~$M^0_q$ using Eq.~\eqref{eq_spect_k}.	      
		From Lemma~\ref{lem_Wg2}, both Laplacian matrices ~$\overline{A}$ and~$\overline{B}$ have one isolated zero eigenvalue and the rest in the left-half-plane for all cases of the switching networks. Therefore, $m$ sets of (possibly changing) eigenvalues of~$M_0$, associated with dimensions~$j=\{1,\ldots,m\}$ are in the form,
		$$\operatorname{Re}\{\lambda_{2n,j}\} \leq \ldots \leq \operatorname{Re}\{\lambda_{3,j}\} < \lambda_{2,j} = \lambda_{1,j} = 0,$$
		Next, we consider~$\alpha M_1$ as a perturbation to matrix $M_0$. This is done using Lemma~\ref{lem_dM}. In particular, we check how the perturbation term~$\alpha M_1$ affects the two zero eigenvalues~$\lambda_{1,j}$ and~$\lambda_{2,j}$ of $M_0$. Denote the perturbed zero eigenvalues  by~$\lambda_{1,j}(\alpha)$ and~$\lambda_{2,j}(\alpha)$ associated to $M$. For all WB and connected switching network topologies, the right and left unit eigenvectors of~$\lambda_{1,j}$,~$\lambda_{2,j}$ follow from Lemma~\ref{lem_Wg2} and \cite{SensNets:Olfati04} as\footnote{The normalizing factors of the unit vectors might be ignored as in the followings we only care about the sign of the terms in $U^\top M_1 V$, not the exact values. },
		\begin{align} \label{eq_V}
			% V &= [V_1~ V_2] =\left(\begin{array}{cc}
			% 	\frac{1}{n} \mb{1}_n& \mb{0}_n \\
			% 	\mb{0}_n & \mb{v}_2
			% \end{array} \right)\otimes I_m
			V &= [V_1~ V_2] =
			\frac{1}{\sqrt{n}} \left(\begin{array}{cc}
				\mb{1}_n& \mb{0}_n \\
				\mb{0}_n & \mb{1}_n
			\end{array} \right) \otimes I_m
			\\ \label{eq_U}
			U^\top &= [U_1~ U_2]^\top =\left(\begin{array}{cc}
				\mb{u}_1& \mb{0}_n \\
				\mb{0}_n & \mb{u}_2
			\end{array} \right)^\top \otimes I_m
		\end{align}
		Recalling $M(\alpha)=M_0+\alpha M_1$ we have~$\partial_{\alpha}{dM(\alpha)}|_{\alpha=0}=M_1$. Then, Lemma~\ref{lem_dM} implies that,
		\begin{align} \label{eq_dmalpha}
			U^\top M_1 V= \left(\begin{array}{cc}
				\mb{0}_{m\times m}	& \times  \\
				\mb{0}_{m\times m}	& -(\mb{u}_2 \otimes I_m)^\top H  (\frac{1}{\sqrt{n}}\mb{1}_n \otimes I_m)
			\end{array} \right).
		\end{align}
		The definition of~$H$, Lemma~\ref{lem_Wg2}, and Assumption~\ref{ass_conv} implies that 
		\begin{align} \label{eq_sum_df}
			-(\mb{u}_2 \otimes I_m)^\top H  (\frac{1}{\sqrt{n}}\mb{1}_n \otimes I_m) = -\sum_{i=1}^n  \frac{u_{2,i}}{\sqrt{n}} \boldsymbol{  \nabla}^2 f_i(\mb{x}_i) \prec 0,
		\end{align}
		where $u_{2,i}$ is the $i$th element of $\mb{u}_2$.
		Then, using Lemma~\ref{lem_dM}, the perturbations ~$\partial_{\alpha} d\lambda_{1,j}|_{\alpha=0}$ and~$\partial_{\alpha} d\lambda_{2,j}|_{\alpha=0}$ are defined based on the eigenvalues of the  matrix in~\eqref{eq_dmalpha}. Since this matrix is triangular and from Eq.~\eqref{eq_sum_df},~$\partial_{\alpha}\lambda_{1,j}|_{\alpha=0} = 0$ and~$\partial_{\alpha}\lambda_{2,j}|_{\alpha=0} < 0$. Therefore, for all sets of switching network Laplacians, the perturbation analysis imply that~$\alpha M_1$ pushes $m$ zero eigenvalues~$\lambda_{2,j}(\alpha)$ of~$M$ toward the left-half-plane and the other $m$ zero eigenvalues~$\lambda_{1,j}(\alpha)$ remain at zero. This implies that sufficiently small~$\alpha$ gives the eigen-spectrum of $M$ as
		
		\small \begin{align}
			\begin{aligned}
				\operatorname{Re}\{\lambda_{2n,j}(\alpha)\} \leq \ldots \leq \operatorname{Re}\{\lambda_{3,j}(\alpha)\}
				\leq \lambda_{2,j}(\alpha) < \lambda_{1,j}(\alpha) = 0,
			\end{aligned}
		\end{align} \normalsize
		This completes the proof.
	\end{proof}

	\begin{thm} \label{thm_lyapunov}
		Given that the Assumptions \eqref{ass_wb}-\eqref{ass_conv} and conditions in Theorem~\ref{thm_zeroeig} hold, for $0 < \alpha <  \frac{\min \{|\operatorname{Re}\{\lambda_2^A\}|,| \operatorname{Re}\{\lambda_2^B\}|\}}{L\overline{\mc{K}}}$, the proposed dynamics~\eqref{eq_xdot_g}-\eqref{eq_ydot_g} converges to the optimizer~$[\mb{x}^*\otimes \mb{1}_n;\mb{0}_{nm}]$.
	\end{thm}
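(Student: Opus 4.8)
The plan is to upgrade the spectral picture of Theorem~\ref{thm_zeroeig} into a \emph{quantitative}, uniform-in-$(t,\gamma)$ stability certificate and then close a two-block (cascade) Lyapunov argument that pins the limit to the optimizer. First I would make the step-size condition do exactly this job. By \eqref{eq_sigma} the nonzero eigenvalues of $M^0$ are precisely those of $\overline{A}\otimes I_m$ and $\overline{B}\otimes I_m$ other than their isolated zeros, so their real parts are at most $-\min\{|\operatorname{Re}\lambda_2^A|,|\operatorname{Re}\lambda_2^B|\}$; by \eqref{eq_beta_M0}--\eqref{eq_spect_k} the sector rescaling $Z(t)$ preserves this up to the factors $\underline{\mc{K}},\overline{\mc{K}}$, uniformly over $t$ and over the (finitely many) topologies in $\Gamma$. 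The perturbation is $\alpha M^1$, whose only nonzero blocks are $-I_{mn}$ and $-H=-\operatorname{diag}[\nabla^2 f_i]$, so $\|M^1\|$ is controlled by $L$ through $L$-smoothness (Assumption~\ref{ass_conv}); together with the factor $\overline{\mc{K}}$ from \eqref{eq_beta_M0}, a first-order perturbation estimate shows the non-trivial modes remain in the open left-half-plane as long as $\alpha L\overline{\mc{K}}<\min\{|\operatorname{Re}\lambda_2^A|,|\operatorname{Re}\lambda_2^B|\}$ --- the stated bound. Theorem~\ref{thm_zeroeig} has already pushed $m$ of the $2m$ zero modes strictly into the left-half-plane (via $-\sum_i\frac{u_{2,i}}{\sqrt{n}}\nabla^2 f_i(\mb{x}_i)\prec 0$, which uses $\sum_i u_{2,i}>0$ from Lemma~\ref{lem_Wg2} and strong convexity of $F$) and left the remaining $m$ at zero.

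The surviving kernel is spanned by $V_1=\frac{1}{\sqrt{n}}(\mb{1}_n;\mb{0}_n)\otimes I_m$ from \eqref{eq_V}, i.e. it is the consensus/tracking manifold $\mc{S}=\{(\mb{1}_n\otimes\mb{d},\ \mb{0}_{nm}):\mb{d}\in\mathbb{R}^m\}$. Because the linearization at every operating point $t$ and under every switching topology $\gamma$ is Hurwitz transverse to $\mc{S}$ with a \emph{uniform} spectral gap, I would construct a quadratic Lyapunov function $V_\perp$ of the disagreement variable $\mb{x}_\perp:=\big((I_n-\frac{1}{n}\mb{1}_n\mb{1}_n^\top)\otimes I_m\big)\mb{x}$ (and of $\mb{y}$), certifying that $V_\perp$ contracts exponentially up to a cross-term that vanishes with $\|\bar{\mb{x}}-\mb{x}^*\|$, where $\bar{\mb{x}}:=\frac{1}{n}\sum_i\mb{x}_i$. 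For the odd, sector-bounded map $h_l$ I would build $V_\perp$ in absolute-stability / Persidskii (diagonally stable) form directly for the nonlinear consensus block, rather than leaning solely on the linearization, so that the nonlinearity is handled globally rather than only near an operating point. This step makes rigorous the informal statement after Theorem~\ref{thm_zeroeig} that the trajectory is driven onto the consensus/gradient-tracking manifold.

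On $\mc{S}$ the remaining per-coordinate dynamics is effectively gradient descent on $F$: summing \eqref{eq_sumydot}--\eqref{eq_sumxdot}, using the initialization $\mb{y}_i(0)=\mb{0}_p$ and \eqref{eq_sumxdot2}, one gets $\dot{\bar{\mb{x}}}=-\frac{\alpha}{n}\sum_i\mb{y}_i=-\alpha\nabla F(\bar{\mb{x}})-\frac{\alpha}{n}\sum_i\big(\nabla f_i(\mb{x}_i)-\nabla f_i(\bar{\mb{x}})\big)$, whose last term is $O(\|\mb{x}_\perp\|)$ by $L$-smoothness. Strong convexity of $F$ then yields $\frac{1}{2}\frac{d}{dt}\|\bar{\mb{x}}-\mb{x}^*\|^2\le -\alpha\mu\|\bar{\mb{x}}-\mb{x}^*\|^2+O\big(\alpha\|\mb{x}_\perp\|\,\|\bar{\mb{x}}-\mb{x}^*\|\big)$. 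Combining this with the bound on $\dot V_\perp$ through a composite Lyapunov function $V=V_\perp+\kappa\|\bar{\mb{x}}-\mb{x}^*\|^2$ with $\kappa>0$ small enough (a standard cascade / small-gain balancing of the two cross-terms, absorbed by Young's inequality) makes $\dot V$ negative definite on the whole state space except at $(\mb{x}^*\otimes\mb{1}_n,\ \mb{0}_{nm})$. Hence $\mb{x}_\perp\to\mb{0}$, $\mb{y}\to\mb{0}$, and $\bar{\mb{x}}\to\mb{x}^*$, i.e. the trajectory converges to the claimed optimizer, with a linear rate governed by $\mu$ and the spectral gap.

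The main obstacle is the \emph{uniformity} of all of the above over the continuum of operating points and over the switching family $\Gamma$: pointwise Hurwitzness of an LTV / nonlinear system does not by itself imply convergence, so the constants in $V_\perp$ must be genuinely independent of $t$ and $\gamma$. I would handle this by restricting to the finite set of distinct WB, connected Laplacians realized by $\Gamma$ (so that $\min_\gamma|\operatorname{Re}\lambda_2^{A_\gamma}|>0$ and $\min_\gamma|\operatorname{Re}\lambda_2^{B_\gamma}|>0$) and by using the sector bounds $\underline{\mc{K}},\overline{\mc{K}}$ together with the oddness of $h_l$ to obtain an absolute-stability-type common Lyapunov function for the nonlinear consensus part, rather than relying on the linearization --- this is exactly where the structural hypotheses on $h_l$ earn their keep. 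A secondary point that must be checked is that $\mb{y}_i(0)=\mb{0}_p$ is consistent with a limit satisfying $\nabla F=0$, i.e. that the conserved quantity $\sum_i\mb{y}_i-\sum_i\nabla f_i(\mb{x}_i)$ selects $\bar{\mb{x}}=\mb{x}^*$ in $\mc{S}$; this is precisely what \eqref{eq_sumxdot2} asserts.
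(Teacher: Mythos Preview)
Your proposal is a valid route to the conclusion, but it differs materially from the paper's proof in both the derivation of the step-size bound and the Lyapunov argument.

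For the bound on $\alpha$, the paper does not use a generic first-order spectral perturbation of $M^0$ by $\alpha M^1$. Instead it writes the characteristic equation of $M(\alpha)$ in the factored form $\det\big((\overline A-\lambda I)(\overline B-\lambda I)+\alpha\lambda H\big)=0$, first treats the case $\overline A=\overline B$ to exhibit the perturbed roots $\lambda(1\pm\sqrt{\alpha H/|\lambda|})$, reads off $\overline\alpha=|\operatorname{Re}\lambda_2|/L$ as the first value for which a root can touch the imaginary axis, and then appends the factor $1/\overline{\mc K}$ from \eqref{eq_beta_M0}--\eqref{eq_beta_M} and generalizes to $\overline A\neq\overline B$ by taking the minimum of the two algebraic connectivities. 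Your operator-norm bound on $M^1$ is conceptually cleaner but would not by itself produce the stated threshold (the $-I_{mn}$ block in $M^1$ means $\|M^1\|$ is not simply $L$), whereas the paper's algebraic route is tailored to give exactly $\min\{|\operatorname{Re}\lambda_2^A|,|\operatorname{Re}\lambda_2^B|\}/(L\overline{\mc K})$.

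For convergence, the paper does not split into disagreement and mean components. It simply sets $\delta=(\mb x;\mb y)-(\mb x^*\!\otimes\mb 1_n;\mb 0_{nm})$, uses $\mc V=\tfrac12\|\delta\|^2$, writes $\dot\delta=M_q\delta$ for the linearized matrix $M_q$, and invokes $\delta^\top M_q\delta\le\max_j\operatorname{Re}\{\lambda_{2,j}\}\,\|\delta\|^2$ (citing \cite{SensNets:Olfati04}) to conclude $\dot{\mc V}<0$ away from the optimizer. Your cascade construction---a Persidskii/absolute-stability Lyapunov $V_\perp$ for the nonlinear consensus block, a strong-convexity estimate for $\bar{\mb x}$, and a small-gain combination $V=V_\perp+\kappa\|\bar{\mb x}-\mb x^*\|^2$---is more elaborate but also more robust: it handles the sector-bounded nonlinearity globally rather than through pointwise linearization, and it confronts head-on the LTV issue you flag (pointwise Hurwitzness under switching does not by itself yield convergence), which the paper addresses only by asserting that stability of the linearization at every operating point suffices. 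The paper's argument is shorter; yours buys a genuinely uniform certificate over $(t,\gamma)$ and over the nonlinearity.
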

	\begin{proof}
			First, we determine the admissible bound on $\alpha$ for proof of convergence.
		Recall from \cite[Appendix]{delay_est} that one can relate the spectrum of $M(\alpha )$ in \eqref{eq_M} to $\alpha$. For ease of notation, from this point onward we derive the bound for $m=1$ (but it holds for any $m>1$). Performing row/column permutations in \cite[Eq.~(18)]{delay_est}  $\sigma(M)$ can be determined from the following, % (RECHECK):
		\begin{align} \nonumber
			\mbox{det}(\alpha  I_{n}) \mbox{det}(H(\overline{A}) +(\overline{B}  - \alpha H -\lambda I_{n}) (\frac{1}{\alpha})(\overline{A}  -\lambda I_{n})) = 0.
		\end{align} 
		which can be simplified as
		\begin{align} \label{eq_m=1}
			\mbox{det}(I_{n}) \mbox{det}((\overline{A}  -\lambda I_{n})(\overline{B}  - \lambda I_{n}) +\alpha \lambda H ) = 0
		\end{align}
		Similar to the proof of Theorem~\ref{thm_zeroeig}, for stability we need to find the admissible range of  $\alpha$ values for which the eigenvalues $\lambda$ remain in the left-half-plane, except one zero eigenvalue. The analysis here is based on the fact that the eigenvalues are continuous functions of the matrix elements~\cite{stewart_book}.
		% Using the above remark and some simplifications, \eqref{eq_m=1}  changes to,
		% \begin{align} \label{eq_m=1_sym}
		% \mbox{det}((\overline{A}^{sym}_{q}  -\lambda I_{n})(\overline{W}^{sym}_{q}  - \lambda I_{n}) +\alpha \lambda H ) = 0
		% \end{align}
		% with real eigenvalues $\lambda$.
		It is clear  that $\alpha=0$ satisfies Eq. \eqref{eq_m=1} and gives $\mbox{det}((\overline{W}  - \lambda I_{n})(\overline{A}  - \lambda I_{n})) = 0$. This leads to the eigen-spectrum following as $\sigma(M) = \sigma(\overline{A}) \cup \sigma(\overline{W})$ with two zero eigenvalues for all switching network topologies. We need to find the other root $\overline{\alpha}>0$, which gives the admissible range as $0<\alpha<\overline{\alpha}$ for the stability of $M(\alpha)$. This follows the continuity of $\sigma(M)$ as a function of $\alpha$ \cite{stewart_book}.
		With some abuse of notation, for any $\operatorname{Re}\{\lambda\}<0$ one can reformulate \eqref{eq_m=1} as
		\begin{align} \nonumber
			\mbox{det}(&(\overline{B}  -\lambda I_{n} \pm \sqrt{\alpha |\lambda| H})(\overline{A}  - \lambda I_{n} \mp \sqrt{\alpha |\lambda| H} ) \\ &\pm \sqrt{\alpha |\lambda| H}(\overline{A}  -\overline{B})) = 0 \label{eq_det_all}
		\end{align}
		This follows the diagonal form of $H$. First, we consider the same consensus matrix for both variables $\mb{x},\mb{y}$.
		Set $\overline{A} = \overline{B}$ and we have
		\begin{align} \nonumber
			\mbox{det}(\overline{A}  - \lambda I_{n} \mp \sqrt{\alpha |\lambda| H} )= \mbox{det}(\overline{A}  -\lambda (I_{n}   \mp \sqrt{\frac{\alpha  H}{|\lambda|}} )) = 0
		\end{align}
		This implies that fo $\lambda \in \sigma(\overline{A})$, its perturbed eigenvalue is $\lambda (1 \pm \sqrt{\frac{\alpha  H}{|\lambda|}})$. Therefore, for $\lambda \neq 0$, the min value of $\overline{\alpha}$ that makes this term zero (at the edge of instability) satisfies the following,
		\begin{align}
			\overline{\alpha} = \argmin_{\alpha} |1 - \sqrt{\frac{\alpha  H}{|\lambda|}}|
			%    = \frac{\min \{|\lambda_j|\neq 0\}}{\max \{H_{ii}\}} = \frac{|\lambda_2|}{\gamma}
			\geq \frac{\min \{|\operatorname{Re}\{\lambda_j\}|\neq 0\}}{\max \{H_{ii}\}} = \frac{|\operatorname{Re}\{\lambda_2\}|}{L}
		\end{align}
		where we recalled $H \preceq L I_{n}$ from Assumption~\ref{ass_conv}.
		%This implies that the (real part of) eigenvalues are perturbed towards the RHP by $\sqrt{\alpha |\lambda| H}$ and one can find the other root value $\overline{\alpha}$ from $\lambda I_{n} = \sqrt{\alpha |\lambda| H}$. This gives $\overline{\alpha} \geq \frac{\min \{|\operatorname{Re}\{\lambda_j\}|\neq 0\}}{\max \{H_{ii}\}} = \frac{|\operatorname{Re}\{\lambda_2\}|}{\gamma}$ for $H \preceq \gamma I_{n}$.
		Recalling perturbed formulation in Theorem~\ref{thm_zeroeig}, this gives the admissible range for $\alpha$ for which $M(\alpha)$ has all its eigenvalues at the left-half-plane except one zero as
		\begin{align} \label{eq_alphabar0}
			0 < \alpha < \overline{\alpha}:= \frac{|\operatorname{Re}\{\lambda_2\}|}{L}
		\end{align}
		For general nonlinear dynamics~\eqref{eq_xdot_g}-\eqref{eq_ydot_g} satisfying Eq.~\eqref{eq_beta_M0}-\eqref{eq_beta_M}, the admissible range changes to
		\begin{align} \label{eq_alphabar00}
			0 < \alpha < \overline{\alpha}:= \frac{|\operatorname{Re}\{\lambda_2\}|}{L\overline{\mc{K}}}
		\end{align}
		For the more general case of $\overline{A} \neq \overline{B}$, this is generalized as,
		\begin{align} \label{eq_alphabar}
			0 < \alpha <  \frac{\min \{|\operatorname{Re}\{\lambda_2^A\}|,| \operatorname{Re}\{\lambda_2^B\}|\}}{L\overline{\mc{K}}} =: \overline{\alpha}
		\end{align}
		This admissible range of $\alpha$ depends only on $\sigma(\overline{A}),\sigma(\overline{B})$ and holds for any value of $m \geq 1$.
		Next, we prove convergence under this admissible bound. It is clear that the optimizer~$[\mb{x}^*\otimes \mb{1}_n;\mb{0}_{nm}]  \in \mathbb{R}^{2mn}$is the invariant set of dynamics~\eqref{eq_xdot_g}-\eqref{eq_ydot_g} and belongs to null-space of $M_q$. Recall that for $0 < \alpha <  \frac{\min \{|\operatorname{Re}\{\lambda_2^A\}|,| \operatorname{Re}\{\lambda_2^B\}|\}}{L\overline{\mc{K}}}$, the eigenvalues of the linearized system dynamics associated with $M_q$ are stable except one isolated zero eigenvalue. This is irrespective of the time-variation of $\mc{G}$ and $M_q$. Define the variable $\delta$ as the distance between the system state and the optimizer,
		$$\delta  = \left(\begin{array}{c} {\mb{x}} \\ {\mb{y}} \end{array} \right) - \left(\begin{array}{c} {\mb{x}}^* \\ \mb{0}_{mn} \end{array} \right) \in \mathbb{R}^{2mn}.$$
		and the positive-semi-definite Lyapunov function ${\mc{V}(\delta) = \frac{1}{2} \delta^\top \delta =  \frac{1}{2}\lVert \delta \rVert_2^2}$. As $\delta \rightarrow \mb{0}_{2mn}$ the state variables converge to the optimizer. We have, 
		\begin{align} \label{eq_mdelta}
			\dot{\delta} = M_q \left(\begin{array}{c} \mb{x} \\ \mb{y} \end{array} \right) - M_q\left(\begin{array}{c} \mb{x}^* \\ \mb{0}_{mn} \end{array} \right) = M_q \delta.
		\end{align}
		This implies that $\dot{\mc{V}} = {\delta}^\top \dot{\delta}=  \delta^\top M_q {\delta}$ and from~\cite[Sections~VIII-IX]{SensNets:Olfati04}, we have 
		\begin{eqnarray} \label{eq_Re2}
			\dot{\mc{V}} = \delta^\top M_q \delta \leq \max_{1\leq j\leq m}\operatorname{Re}\{{\lambda}_{2,j}\} \delta^\top  \delta, 
		\end{eqnarray}
		where $\max_{1\leq j\leq m}\operatorname{Re}\{{\lambda}_{2,j}\}$ is the real part of the largest  negative eigenvalue of $M_q$ at all times (following from Theorem~\ref{thm_zeroeig}).
		Note that $\dot{\mc{V}}$ is negative-definite for all $\delta \neq \mb{0}_{2mn}$ and is zero for $\delta = \mb{0}_{2mn}$. From Lyapunov theorem, this network of integrators is asymptotically globally stable and $\delta \rightarrow \mb{0}_{2mn}$ implies that the solution converges to the optimizer. 
	\end{proof}
	%For undirected networks, since the eigenvalues of $M$ are real, one extends the proof to hybrid setups under a switching signal $q$.
	
	\subsection{Discussions}
	We now provide some important remarks in the context of Theorems~\ref{thm_zeroeig} and \ref{thm_lyapunov} and the sector-bound property.
	\begin{rem}
	For convergence we only need $(\mb{1}_n \otimes I_p)^\top H (\mb{1}_n \otimes I_p) \succ 0$, i.e., strong-convexity of the global objective function $F(\mb x)$. Thus, the local cost functions $f_i(\mb{x}_i)$ might be non-convex, i.e., $\nabla^2 f_i(\mb x_i)$ is not necessarily positive while their summation is positive to ensure that $F(\mb x)$ is strongly convex. An academic example of such a case is given in the simulation. Therefore, the convexity assumption in this paper is more relaxed as compared to many existing literature. Note that the log-scale quantization in this work is independent of Assumption~\ref{ass_conv}.  
\end{rem}	
	\begin{rem}
	From Theorem~\ref{thm_zeroeig}, the proposed dynamics~\eqref{eq_xdot_g}-\eqref{eq_ydot_g} has one zero eigenvalue related to the agreement (or consensus) on the states, i.e., all the auxiliary states reach consensus on $\mb{y}_i = 0$ and all the main states reach consensus on $\mb{x}_i = \mb{x}^*$. On the other hand, the rest of the eigenvalues remain in the left-half-plane which results in the stability of the proposed dynamics and convergence toward the optimal point.  
    \end{rem}  
	
	\begin{rem}\label{rem_rate}
		Recall that one can approximate the convergence rate (or decay rate) of the linearized system dynamics given by Eq.~\eqref{eq_mdelta} which is concluded from the dynamics~\eqref{eq_xdot_g}-\eqref{eq_ydot_g}. Following the eigen-spectrum of the system matrix $M_q(\alpha)$ and using Eq.~\eqref{eq_Re2}, the convergence rate is (at least) $\exp(\max_{1\leq j\leq m}\operatorname{Re}\{{\lambda}_{2,j}\} t)$. This shows linear convergence in log-scale which is approved by the simulations in the next section.
		%	Consider the typical example of link nonlinearity $h_l(\cdot)$ as log-scale quantization on the communication channels. Following Eq.~\eqref{eq_hl_q2}, for sufficiently small $\rho$ one can see that $q_l(\cdot)$ gets arbitrarily close to the linear case.  The 
		Also, for the special case of no quantization (i.e., $h_l(\mb{x}_i)=\mb{x}_i$ and $h_l(\mb{y}_i)=\mb{y}_i$) and for strongly convex cost functions with \textit{stochastic} weight matrices $A_\gamma,B_\gamma$ the convergence rate can be calculated via similar analysis as in  \cite[Lemmas~ 3.2-3.4]{jakovetic2018convergence}.  
	\end{rem}

	\begin{rem}
	Assume that the linear version of the proposed algorithm~\ref{alg_1} with $h_l(\mb{x}_j^k)=\mb{x}_j^k$ and $h_l(\mb{y}^{k}_j)=\mb{y}^{k}_j$ gives certain optimality gap. 
	Then, recalling Eq.~\eqref{eq_hl}, one can see that $\underline{\mc{K}} (\mb{x}_j^k -\mb{x}_i^k \leq h_l(\mb{x}_j^k)-h_l(\mb{x}_i^k) \leq \overline{\mc{K}} (\mb{x}_j^k -\mb{x}_i^k)$ and $\underline{\mc{K}} (\mb{y}_j^k -\mb{y}_i^k \leq h_l(\mb{y}_j^k)-h_l(\mb{y}_i^k) \leq \overline{\mc{K}} (\mb{y}_j^k -\mb{y}_i^k)$. 	
	Therefore, the optimality gap is of the same order as in the linear case multiplied by a factor of $\overline{\mc{K}}$ and $\underline{\mc{K}}$. Thus, for the logarithmic quantization nonlinearity with $\overline{\mc{K}} = 1+\frac{\rho}{2}$ and $\underline{\mc{K}}= 1-\frac{\rho}{2}$, the optimality gap gets arbitrarily close to the linear case by choosing sufficiently small $\rho$. This is illustrated in the next section via numerical simulations. 
\end{rem} 

\begin{rem}
	Uniform quantization in contrast to logarithmic quantization is not sector-bound; as shown in Fig.~\ref{fig_quant_desmos}, the bounds $x \pm \frac{\rho}{2}$ do not pass through the origin. Therefore, the results on convergence and optimality do not hold for uniformly quantized data. In fact, some works in the literature show that uniform quantization generally results in a larger optimality gap, see an example for quantized resource allocation in \cite{cpu}. In the next section, we clearly show by numerical simulation that log-scale quantization as compared to uniform quantization results in a smaller optimality gap. 
\end{rem}
	
\section{Numerical Experiments} \label{sec_exp}
	In this section, we perform simulations on both real and academic setups to verify our results.
	\subsection{Academic Example}
	We consider a network of $n=16$ agents to minimize the following cost function in a distributed and collaborative way:
	\begin{align}\label{eq_fij_sim}
		f_{i,j}(x_i) = 4 x_i^2 +3\sin^2(x_i)+a_{i,j} \cos(x_i) + b_{i,j}x_i,
	\end{align}
	with random parameters $a_{i,j}$ and $b_{i,j}$ in the range $[-10,10]$ such that $\sum_{i=1}^n \sum_{j=1}^m a_{i,j} = 0$ and $\sum_{i=1}^n \sum_{j=1}^m b_{i,j}=0$ and $a_{i,j},b_{i,j} \neq 0$. Note that $f_{i,j}(x_i)$ is not necessarily convex, i.e., $\nabla^2 f_{i}(\mb{x})$ might be negative at some points, while the global cost summing all $f_{i,j}(x_i)$ is strongly convex. We apply Algorithm~\ref{alg_1} over random ad-hoc networks and structured exponential networks for both uniform and log-scale quantization. First, we compare the convergence rate and optimality gap over a time-varying (switching) exponential network (as a structured network) versus a randomly changing Erdos-Renyi (ER) network subject to log-scale quantization with $\rho=\frac{1}{128}$. For the exponential network, we change the link weights every $100$ iterations and for the ER random network both the link weights and network structure change every $100$ iterations. The result is shown in Fig.~\ref{fig_acad_exprand}. The simulation results clearly verify that the solution converges over switching networks and also, as stated in Section~\ref{sec_effect}, the structured network gives a smaller optimality gap and faster convergence.
	\begin{figure} %[b]
		\centering
		\includegraphics[width=2.4in]{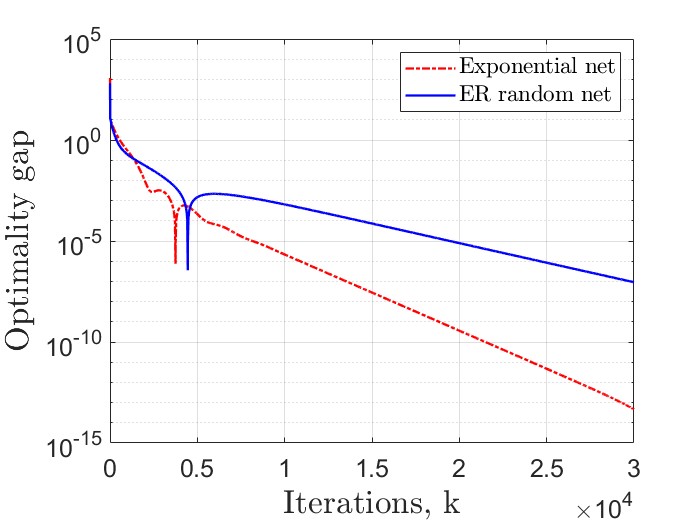}	
		\caption{Logarithmically quantized distributed optimization over an exponential network versus an ER random network of the same size.
		} \label{fig_acad_exprand}
	\end{figure}
	
	Next, we compare distributed optimization over exponential graphs under time-varying weights subject to uniform quantization versus log-scale quantization in Fig.~\ref{fig_acad_quant} for different quantization levels $\rho$. Evidently from the figure, log-scale quantization results in a smaller optimality gap as compared to uniform quantization which verifies the statements in Section~\ref{sec_effect}. This shows the advantage of using logarithmic quantization over uniform quantization. Also, one can see that the log-scale quantization level has negligible effect on the convergence rate and optimality gap. Note that, from Remark~\ref{rem_rate}, the convergence rate is linear in the log-scale. Therefore, one can approximate the convergence rate of the log-quantized case by the linear (non-quantized) solution. Also, note that considering the computational limitation of MATLAB the existing optimality gap of the log-quantized dynamics in the simulation is acceptable.
	\begin{figure} %[b]
		\centering
		\includegraphics[width=2.4in]{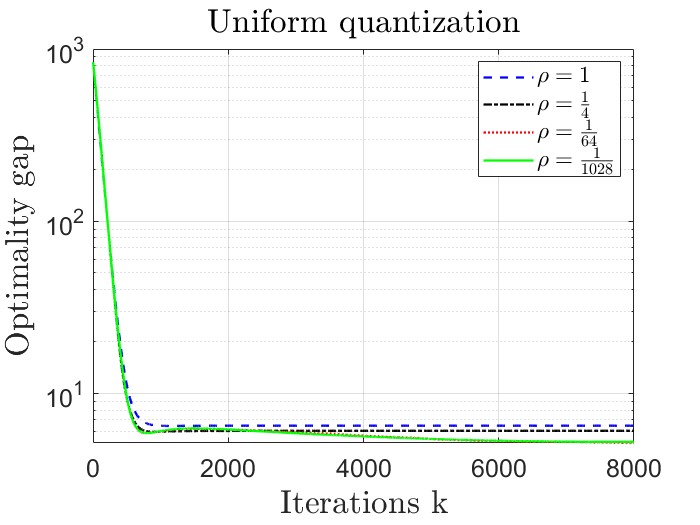}	\includegraphics[width=2.4in]{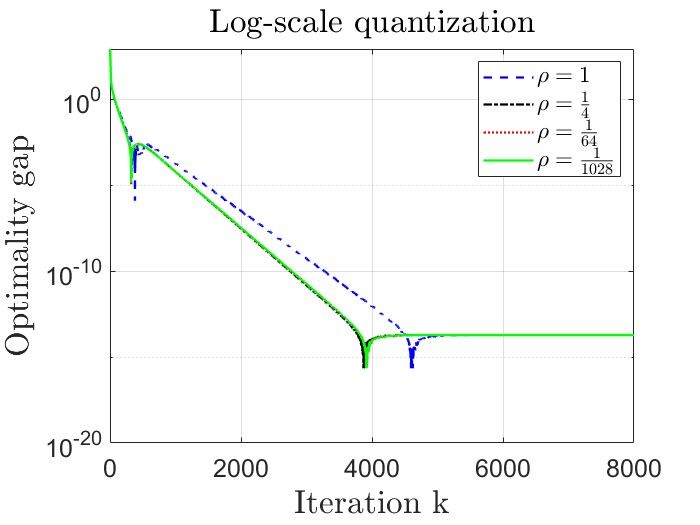}
		\caption{Optimality gap of uniformly quantized versus logarithmically quantized distributed learning over exponential network.
		} \label{fig_acad_quant}
	\end{figure}
	
	Next, we compare our proposed log-quantized algorithm with some existing literature: \textbf{TV-AB} \cite{saadatniaki2020decentralized}, finite-time \cite{yao2018distributed}, and fixed-time \cite{ning2017distributed} algorithms. It should be noted that none of these algorithms are quantized and these algorithms are designed to solve strongly-convex cost functions. Therefore, we need to decrease the non-convexity parameter of the cost model \eqref{eq_fij_sim} and set $a_{i,j} \in [-1,1]$ to make it strongly-convex for this simulation. For our log-quantized solution we set $\rho = \frac{1}{4}$. The comparison is given in Fig.~\ref{fig_acad_comp}. It is clear from the figure that our solution, although quantized, has the same performance as the non-quantized AB-type solution and shows better performance than fixed-time and finite-time solutions.  
	\begin{figure} %[b]
	\centering
	\includegraphics[width=2.4in]{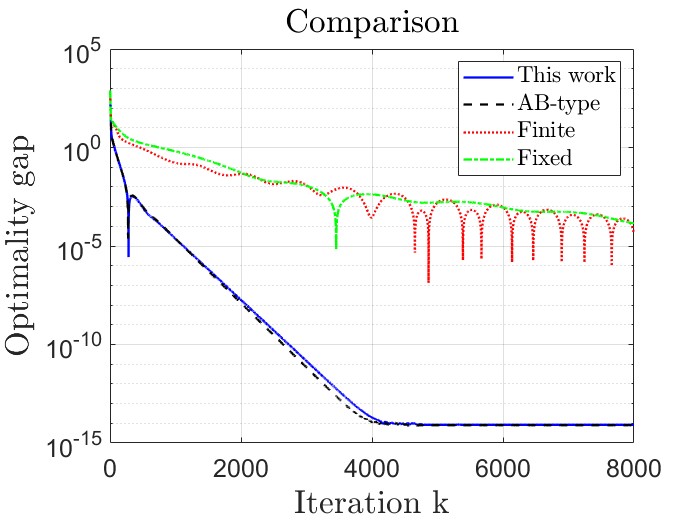}	
	\caption{The comparison of the proposed log-quantized algorithm with the existing non-quantized algorithms for a strongly-convex cost model.
	} \label{fig_acad_comp}
\end{figure}	

	\subsection{Real Data-Set Example}
	In this section, we analyze some real data from \cite{9827792}. We randomly select some images from the MNIST data set and classify them using logistic regression with a convex regularizer. The set of
	$N = 12000$ labelled images are used for classification, distributed
	among the $n=16$ agents/nodes. The overall cost function is defined as 
	\begin{align}
		\min_{\mb{b},c} &
		F(\mb{b},c) = \frac{1}{n}\sum_{i=1}^{n} f_i
	\end{align}  
	where each node $i$ has access to a batch of $m_i=750$ sample data and locally minimizes the following training cost:
	\begin{align}\label{eq_fij_regression}
		f_i(\mb{x}) = \frac{1}{m_i}\sum_{j=1}^{m_i} \ln(1+\exp(-(\mb{b}^\top x_{i,j}+c)y_{i,j}))+\frac{\lambda}{2}\|\mb{b}\|_2^2.
	\end{align}
	with $\mb{b},c$ as the parameters of the separating hyperplane.
	We run and compare the distributed training over both geometric  (as an ad-hoc network) and exponential graphs (as a structured network). The optimality gap (also known as the optimization residual) in the figures is defined as $F(\overline{\mb{x}}^k)-F(\mb{x}^*)$ with $\overline{\mb{x}}^k = \frac{1}{n} \sum_{i=1}^{n}\mb{x}^k_i$. 
	
	First, we compare some existing algorithms in the literature with the proposed Algorithm~\ref{alg_1}. The following algorithms are used for comparison: \textbf{GP} \cite{nedic2014distributed}, \textbf{SGP} \cite{spiridonoff2020robust,nedic2016stochastic}, \textbf{S-ADDOPT} \cite{qureshi2020s}, \textbf{ADDOPT} \cite{xi2017add}, and \textbf{PushSAGA} \cite{qureshi2021push}. 
	Fig.~\ref{fig_compare} presents the comparison results.      
	\begin{figure} %[b]
		\centering
		\includegraphics[width=2.4in]{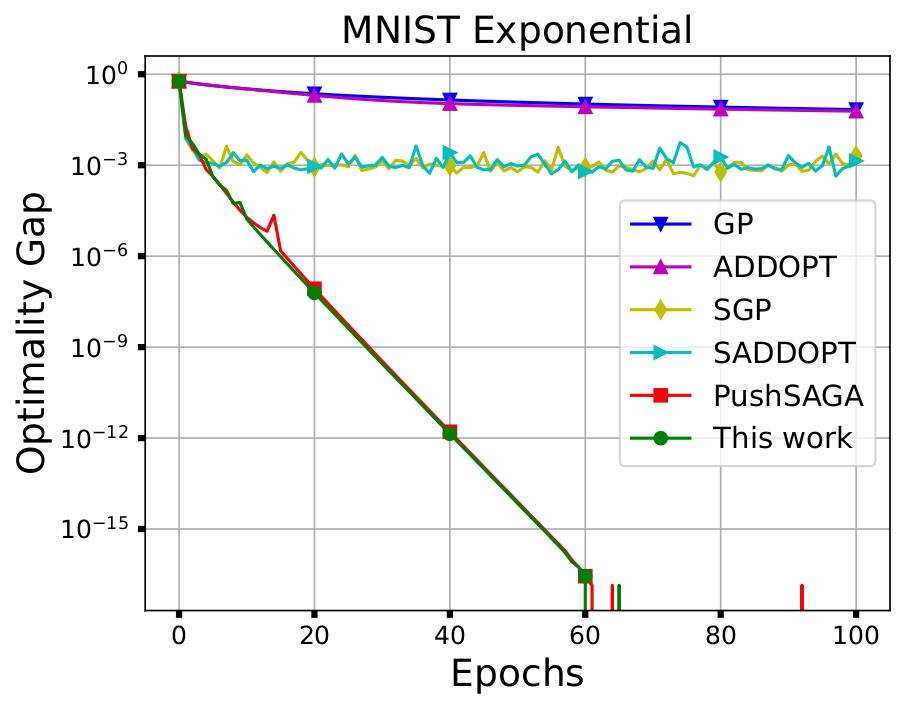}
		\includegraphics[width=2.4in]{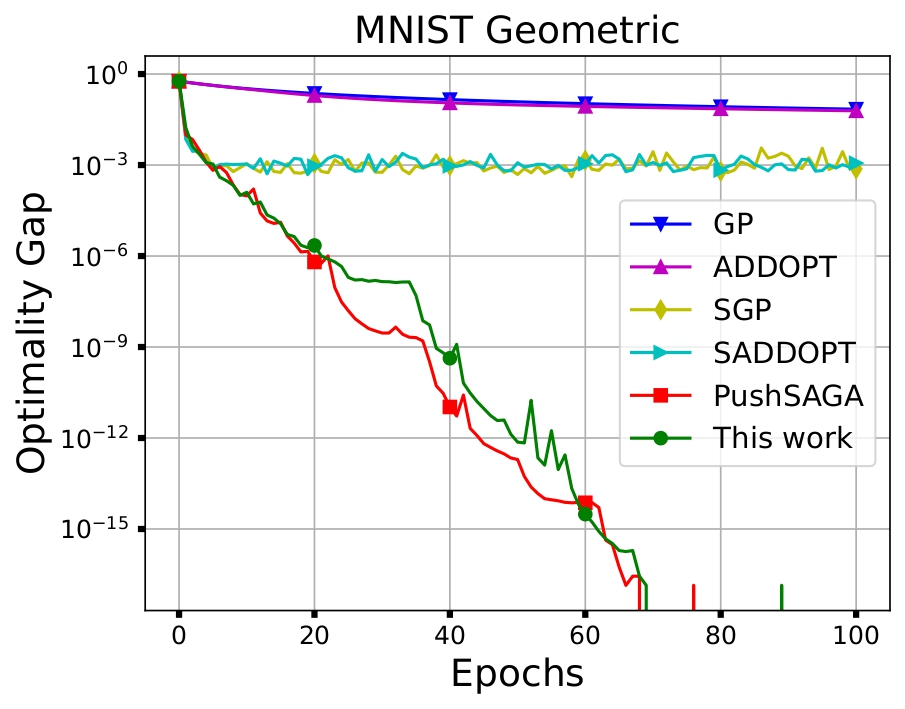}
		\caption{Comparison between the performance of the proposed algorithm (in the absence of quantization) with some existing algorithms in the literature.
		} \label{fig_compare}
	\end{figure}
	
	Next, we compare training over exponential networks subject to uniform and log-scale quantization for different data quantization levels, see Fig.~\ref{fig_rho_exp} and Fig.~\ref{fig_rho_exp2}. As it can be observed, the uniform quantization results in a larger optimality gap in the training outcome. 
	\begin{figure} %[b]
		\centering
		\includegraphics[width=2.4in]{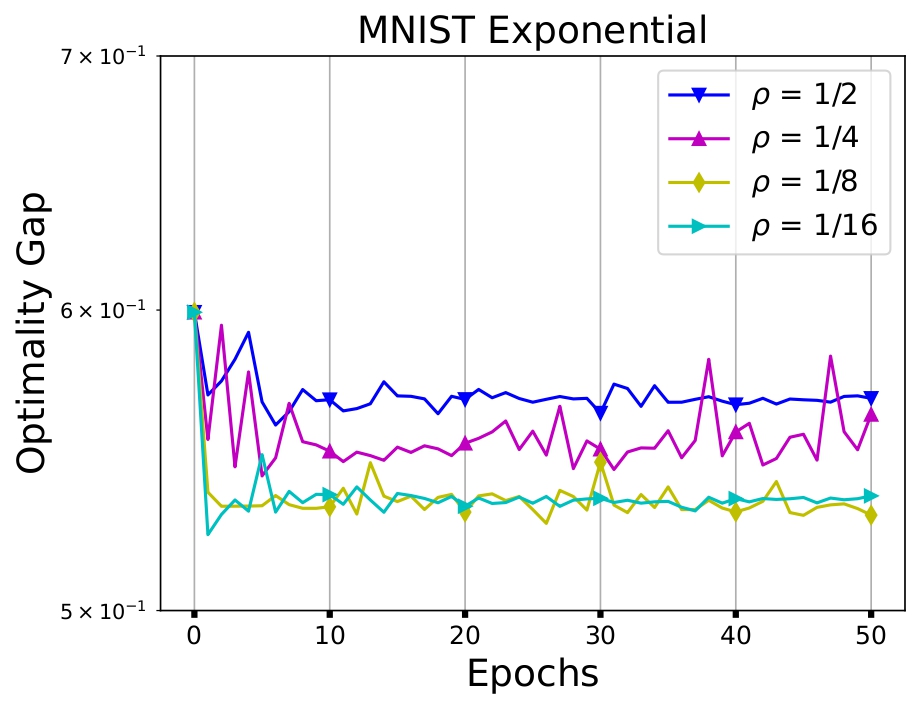}
		\caption{Optimality gap of distributed learning of MNIST dataset over exponential graphs subject to uniform quantization with different quantization levels.
		} \label{fig_rho_exp}
	\end{figure}
	\begin{figure} %[b]
	\centering
	\includegraphics[width=2.4in]{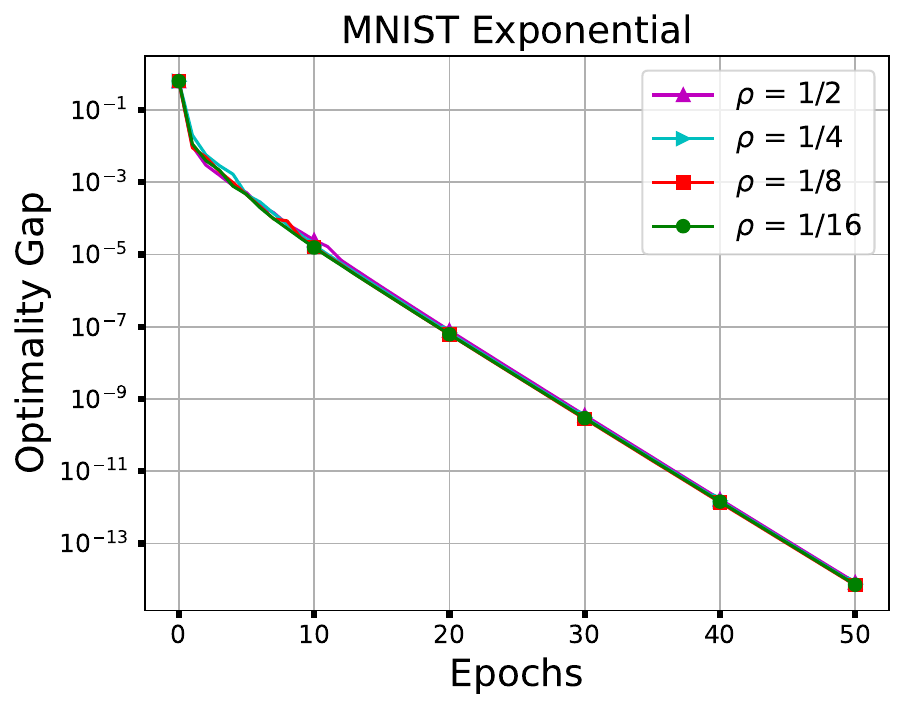}
	\caption{Optimality gap of distributed learning of MNIST dataset over exponential graphs subject to log-scale quantization with different quantization levels.
	} \label{fig_rho_exp2}
    \end{figure}
	The same comparison is performed over the geometric network and the result is shown in Fig.~\ref{fig_rho_geo} and Fig.~\ref{fig_rho_geo2}. As expected, uniform quantization results in a larger optimality gap. Also, from Fig.~\ref{fig_rho_exp2} and~\ref{fig_rho_geo2} one can see that the convergence rates for different log-scale quantization rates do not change that much. 
	It is worth noting from the figures that, the optimality gap is generally larger over geometric networks as a non-structured topology. However, the log-scale quantization adds more computational complexity to the algorithm as compared to the uniform quantization. This is because log-quantization is more precise around the origin and requires more bits for transmitting those values.
	\begin{figure} %[b]
		\centering
		\includegraphics[width=2.4in]{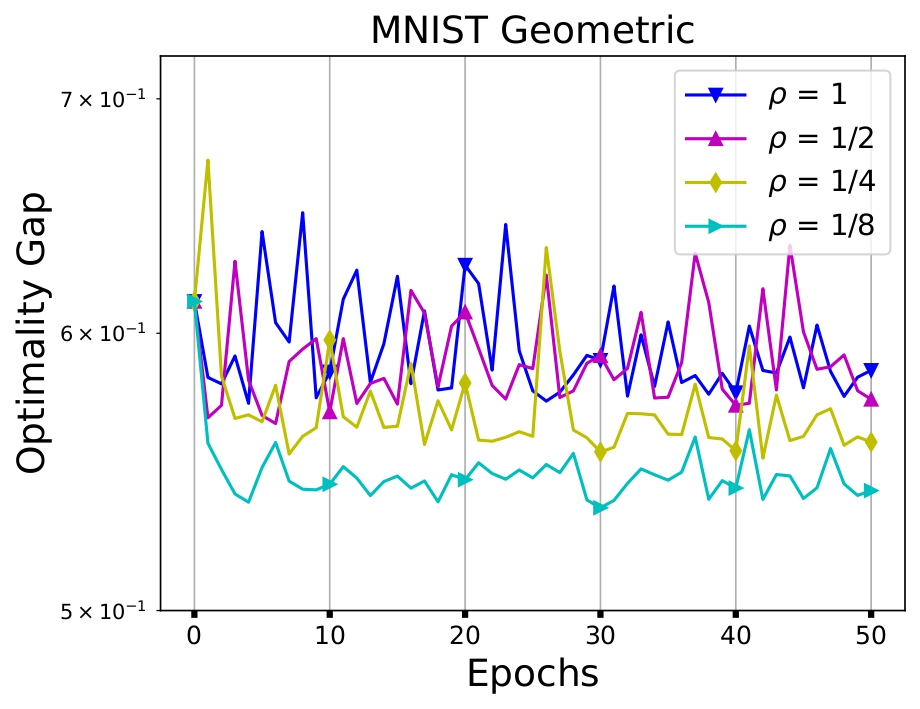}
		\caption{Optimality gap of distributed learning of MNIST dataset over geometric graphs subject to uniform quantization with different quantization levels.
		} \label{fig_rho_geo}
	\end{figure}
	\begin{figure} %[b]
	\centering
	\includegraphics[width=2.4in]{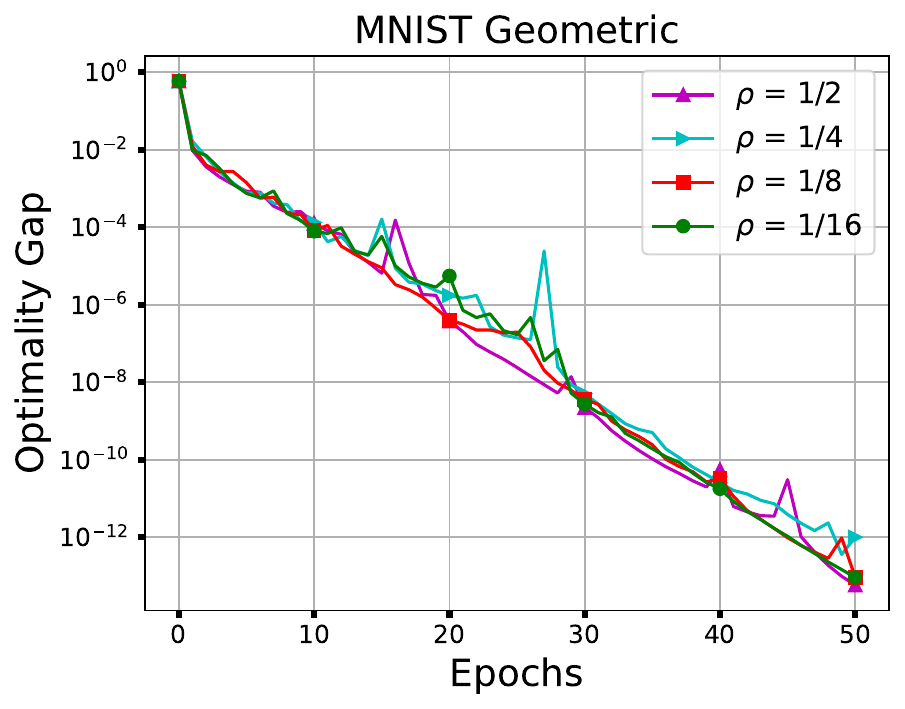}
	\caption{Optimality gap of distributed learning of MNIST dataset over geometric graphs subject to log-scale quantization with different quantization levels.
	} \label{fig_rho_geo2}
\end{figure}	
	\section{Conclusions}\label{sec_con}
	\subsection{Concluding Remarks}
	This work proposes a distributed optimization setup for learning over networks. We particularly address nonlinear channels to take into account, for example, possible data quantization. We compare log-scale quantization (as a sector-bound nonlinearity) with uniform quantization (as a non-sector-bound nonlinearity) over both academic and real data.  Our results show that the log-scale quantization leads to a smaller optimality gap for distributed training. We further compare the quantized setups over exponential and random (geometric and ER) networks, and, as expected, the exponential graph as a structured network results in a smaller optimality gap as compared to geometric and ER graphs representing ad-hoc multi-agent networks.
	
	\subsection{Future Directions}
	One direction of future research is to more relax Assumption~\ref{ass_wb} to uniform-connectivity of the network, where the union of the graphs over a finite interval is connected. More detailed theoretical analysis and comparison between log-scale versus uniform quantization for general distributed optimization methods and quantifying the optimality gap of the discretized version for large sampling periods are set as future research. 
	Other future research efforts will focus on considering the effect of quantization on different machine learning techniques, for example, distributed minimization of error-back-propagation in Neural-Networks and distributed reinforcement learning. Application to distributed optimal control is another direction of research interest.

	\bibliographystyle{IEEEbib}
	\bibliography{bibliography}
	
	\begin{IEEEbiography}[{\includegraphics[width=1.1in,clip,keepaspectratio]{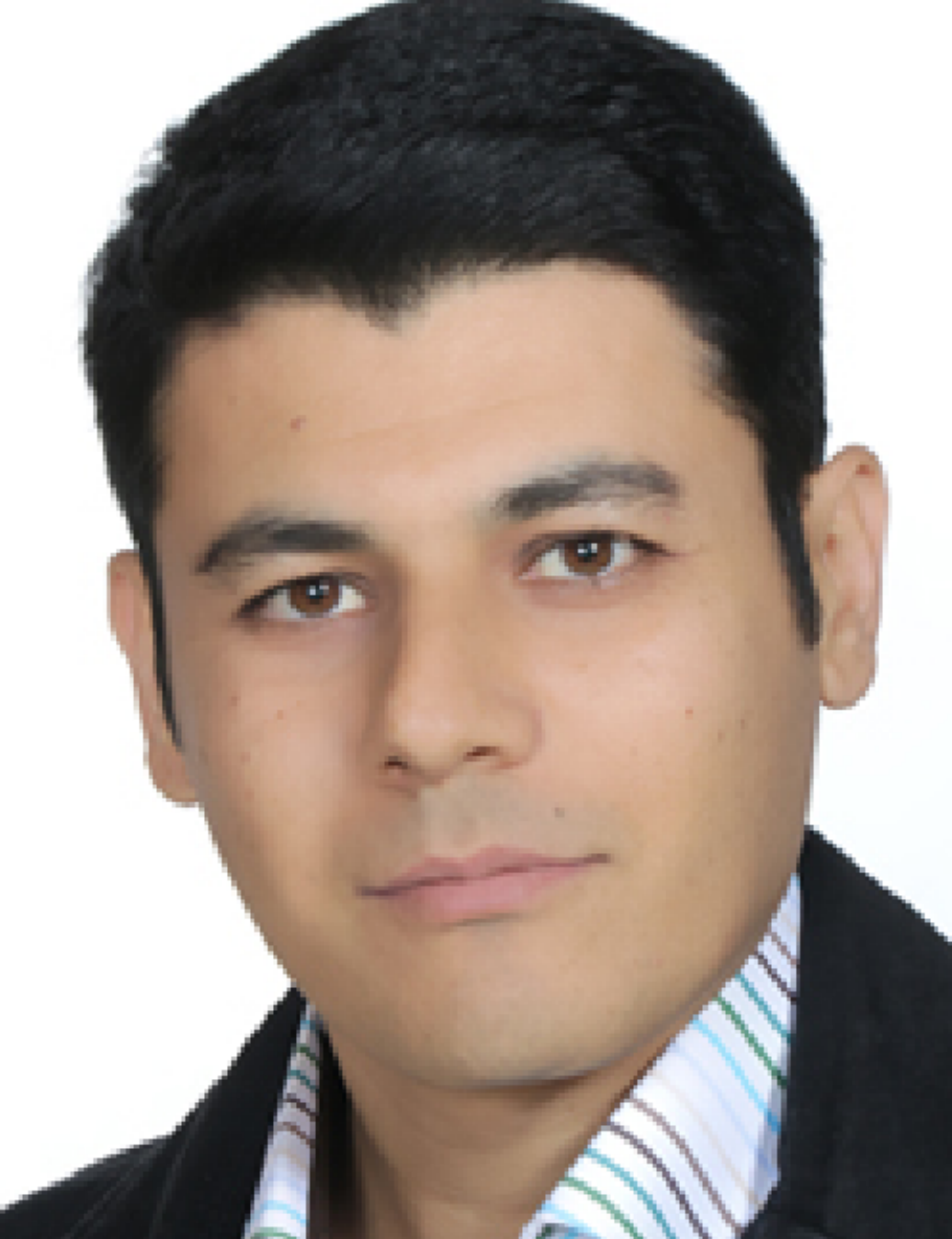}}]{Mohammadreza~Doostmohammadian}
		received his B.Sc. and M.Sc. in Mechanical Engineering from Sharif University of Technology, Iran, respectively in 2007 and 2010, where he worked on applications of control systems and robotics. He received his PhD in Electrical and Computer Engineering from Tufts University, MA, USA in 2015. During his PhD in Signal Processing and Robotic Networks (SPARTN) lab, he worked on control and signal processing over networks with applications in social networks. From 2015 to 2017 he was a postdoc researcher at ICT Innovation Center for Advanced Information and Communication Technology (AICT), School of Computer Engineering, Sharif University of Technology, with research on network epidemic, distributed algorithms, and complexity analysis of distributed estimation methods. He was a researcher at Iran Telecommunication Research Center (ITRC), Tehran, Iran in 2017 working on distributed control algorithms and estimation over IoT. Since 2017 he has been an Assistant Professor with the Mechatronics Department at Semnan University,  Iran. From 2021 to 2022, he was a visiting researcher at the School of Electrical Engineering and Automation, Aalto University, Espoo, Finland,  working on constrained and unconstrained distributed optimization techniques and their applications. His general research interests include distributed estimation, control, learning, and optimization over networks. He was the chair of the robotics and control session at the ISME-2018 conference, the session chair at the 1st Artificial Intelligent Systems Conference of Iran, 2022, and the Associate Editor of IEEE/IFAC CoDIT2024 conference.
	\end{IEEEbiography}

	\begin{IEEEbiography}[{\includegraphics[width=1.1in,clip,keepaspectratio]{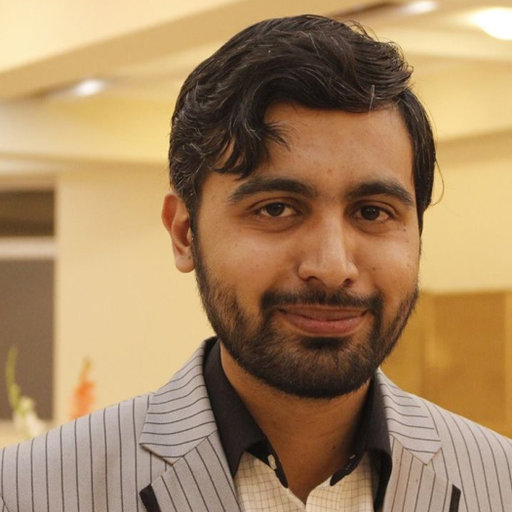}}]{Muhammad~I.~Qureshi}
	 received the B.S. degree in electrical engineering from the University of Engineering and Technology Lahore, Pakistan, in 2017, and the Ph.D. degree in electrical and computer engineering from Tufts University, USA, in 2024. He was working as a Research Scientist at Tufts University under the supervision of Prof. Usman A. Khan. Currently, he is a researcher at Intel corporation.
\end{IEEEbiography}

	\begin{IEEEbiography}[{\includegraphics[width=1.1in,clip,keepaspectratio]{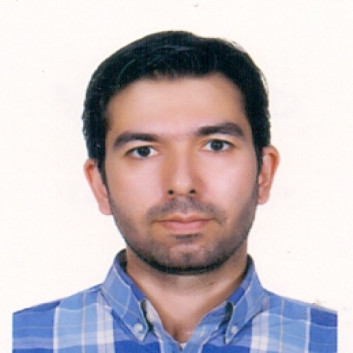}}]{Mohammad~Hossein~Khalesi}
		received his B.Sc. degree in Mechanical Engineering and also in Aerospace Engineering, M.Sc. degree in Mechanical Engineering, and Ph.D. in Mechanical Engineering all from Sharif University of Technology, Tehran, Iran, respectively in 2011, 2013, and 2019. Currently, he is an assistant professor of Mechanical Engineering at Semnan University, Semnan, Iran.
\end{IEEEbiography}

\begin{IEEEbiography}[{\includegraphics[width=1.1in]{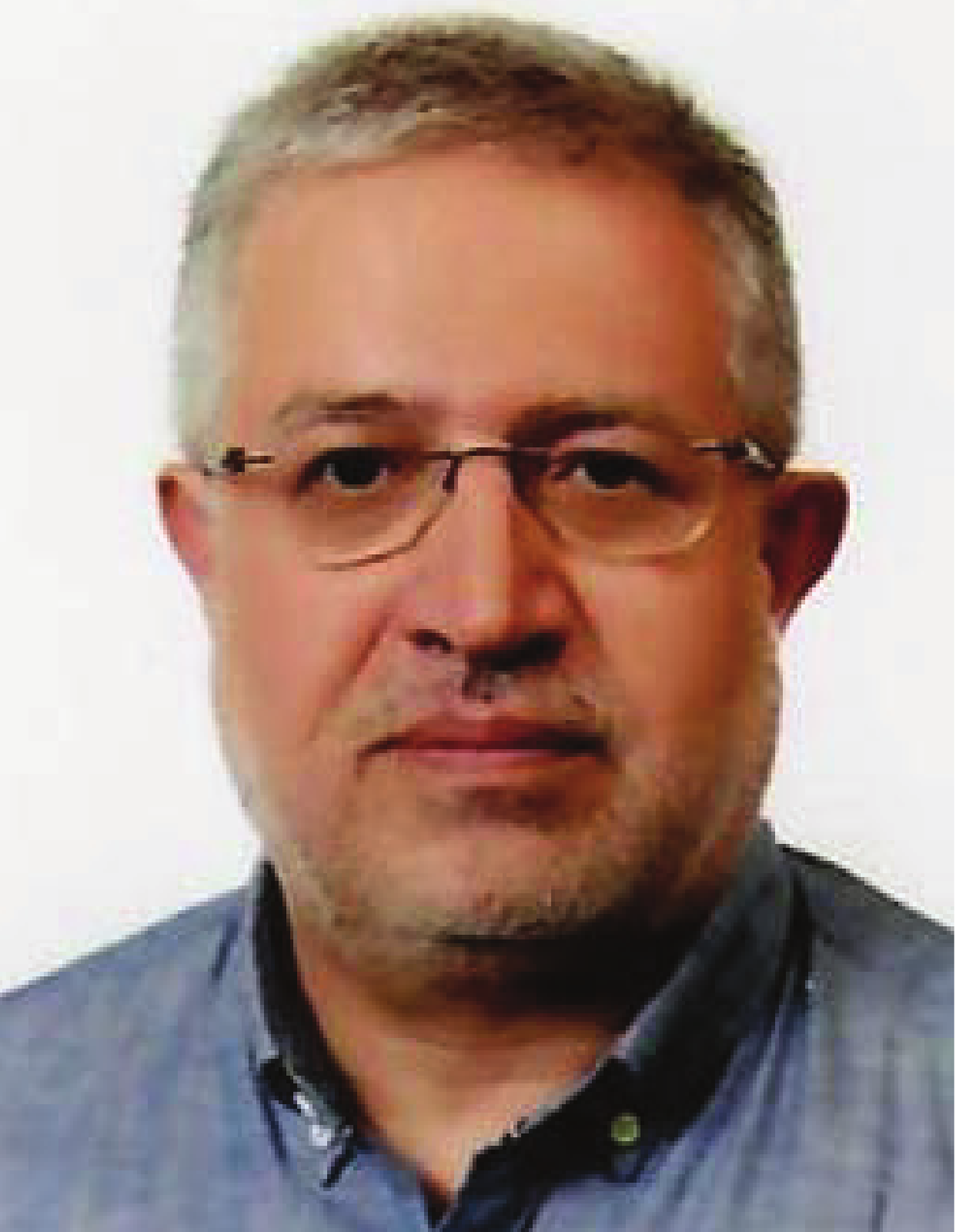}}]{Hamid R. Rabiee}
	received his BS and MS degrees (with Great Distinction) in Electrical Engineering from CSULB, Long Beach, CA (1987, 1989), his EEE degree in Electrical and Computer Engineering from USC, Los Angeles, CA (1993), and his Ph.D. in Electrical and Computer Engineering from Purdue University, West Lafayette, IN, in 1996. From 1993 to 1996 he was a Member of Technical Staff at AT\&T Bell Laboratories. From 1996 to 1999 he worked as a Senior Software Engineer at Intel Corporation. He was also with PSU, OGI and OSU universities as an adjunct professor of Electrical and Computer Engineering from 1996-2000. Since September 2000, he has joined Sharif University of Technology, Tehran, Iran. He was also a visiting professor at the Imperial College of London for the 2017-2018 academic year. He is the founder of Sharif University Advanced Information and Communication Technology Research Institute (AICT), ICT Innovation Center, Advanced Technologies Incubator (SATI), Digital Media Laboratory (DML), Mobile Value Added Services Laboratory (VASL), Bioinformatics and Computational Biology Laboratory (BCB) and Cognitive Neuroengineering Research Center. He has also been the founder of many successful High-Tech start-up companies in the field of ICT as an entrepreneur. He is currently a Professor of Computer Engineering at Sharif University of Technology, and Director of AICT, DML, and VASL. He has been the initiator and director of many national and international level projects in the context of Iran National ICT Development Plan and UNDP International Open Source Network (IOSN). He has received numerous awards and honors for his Industrial, scientific and academic contributions. He has acted as chairman in a number of national and international conferences, and holds three patents. He is also a Member of IFIP Working Group 10.3 on Concurrent Systems, and a Senior Member of IEEE. His research interests include statistical machine learning, Bayesian statistics, data analytics and complex networks with applications in social networks, multimedia systems, cloud and IoT privacy, bioinformatics, and brain networks.
\end{IEEEbiography}

	\begin{IEEEbiography}[{\includegraphics[width=1.1in]{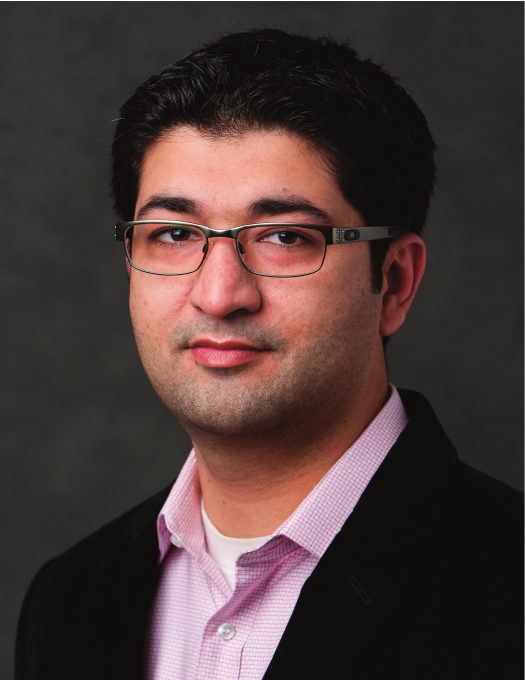}}]{Usman A. Khan} is a Professor of ECE and Computer Science at Tufts University. He is also an Amazon Scholar with Amazon Robotics. His research interests include artificial intelligence, machine learning, robotics, optimization and control, and stochastic dynamical systems. Recognition of his work includes the prestigious NSF Career award, several federally funded projects and NSF REU awards, an IEEE journal cover. He received the 2023 Tufts Inventor award, the 2022 EURASIP Best Paper Award for articles published in the \textit{EURASIP Journal on Advances in Signal Processing} over 2017-2021, and Best Student Paper awards at the \textit{IEEE International Conference on Networking, Sensing and Control} (2014), and at the \textit{IEEE Asilomar Conference} (2014 and 2016). 
 Dr. Khan received his B.S. degree in 2002 from University of Engineering and Technology, Pakistan, M.S. degree in 2004 from University of Wisconsin, and Ph.D. degree in 2009 from CMU, all in ECE. He was a postdoc in the GRASP lab at the U-Penn and also has held a Visiting position at KTH, Sweden. Dr. Khan is an \textit{IEEE Senior Member} and was an elected full member of the \textit{Sensor Array and Multichannel TC} with the \textit{IEEE Signal Processing Society} from 2019-2022, where he was an Associate member from 2010 to 2019. He was an elected full member of the \textit{IEEE Big Data Special Interest Group} from 2017 to 2019, and has served on the \textit{IEEE Young Professionals Committee} and on \textit{the IEEE Technical Activities Board}. He has served as an Associate Editor on several top-tier IEEE publications: \textit{IEEE Transactions on Smart Grid} (2014-2017); \textit{IEEE Control System Letters} (2018-2020), \textit{IEEE Transactions on Signal and Information Processing over Networks} (2019-current), where he received the 2023 Outstanding Editorial Member award; \textit{IEEE Open Journal of Signal Processing} (2019-current); and \textit{IEEE Transactions on Signal Processing} (2021-2023), where currently he is a Senior Area Editor. He served as the Chief Editor for the \textit{Proceedings of the IEEE} special issue on \textit{Optimization for Data-driven Learning and Control} (Nov. 2020), and as a Guest Associate Editor for the \textit{IEEE Control System Letters} special issue on \textit{Learning and Control} (Nov. 2020). He served as the Technical Area Chair for the Networks track in \textit{2020 IEEE Asilomar Conference on Signals Systems and Computers} and for the \textit{Signal Processing for Self Aware and Social Autonomous Systems} track at the \textit{1st IEEE International Conference on Autonomous Systems} (Aug. 2021).
		 
	\end{IEEEbiography}
\end{document}